\documentclass[reqno, letterpaper]{article}

\usepackage{amsmath}
\usepackage{amsfonts}
\usepackage{amssymb}
\usepackage{amsthm}

\textwidth6in
\newcommand{\oemargin}[1]{\oddsidemargin#1 \evensidemargin#1}
\oemargin{0.3in}




\DeclareMathOperator*\essinf{essinf}

\newcommand{\abs}[1]{\left| #1 \right|} 
\newcommand{\set}[1]{\left\{#1\right\}} 
\newcommand{\sets}[2]{\set{#1\,:\,#2}} 
\newcommand{\inds}[1]{ {\mathbf 1}_{\set{#1}}} 
\newcommand{\ind}[1]{ {\mathbf 1}_{{#1}}} 
\newcommand{\cd}{\text{c\` adl\` ag} } 
\newcommand{\norm}[1]{{||#1||}} 


\newcommand{\prf}[1]{ ( #1 )_{t\in [0,T]}}

\newcommand{\RN}[2]{\frac{d#1}{d#2}}

\newcommand{\dqp}{\RN{\QQ}{\PP}}


\providecommand{\R}{} \renewcommand{\R}{{\mathbb R}}
 
\newcommand{\N}{{\mathbb N}}
\newcommand{\PP}{{\mathbb P}}
\newcommand{\QQ}{{\mathbb Q}}

\newcommand{\EE}{{\mathbb E}}
\newcommand{\FF}{{\mathcal F}}

\newcommand{\BB}{{\mathcal B}}

\newcommand{\UU}{\mathbb{U}}

\newcommand{\MM}{{\mathcal M}}

\newcommand{\EN}{{\mathcal E}}
\newcommand{\XX}{{\mathcal X}}
\newcommand{\YY}{{\mathcal Y}}

\newcommand{\eps}{\varepsilon}
\newcommand{\ld}{\lambda}

\newcommand{\el}{{\mathbb L}} 
\newcommand{\lzer}{\el^0}
\newcommand{\lone}{\el^1}
\newcommand{\ltwo}{\el^2}



\numberwithin{equation}{section}
\theoremstyle{plain}                
\newtheorem{theorem}{Theorem}[section]
\newtheorem{lemma}[theorem]{Lemma}
\newtheorem{proposition}[theorem]{Proposition}
\newtheorem{corollary}[theorem]{Corollary}
\theoremstyle{definition}           
\newtheorem{definition}[theorem]{Definition}
\newtheorem{example}[theorem]{Example}
\newtheorem{problem}{Problem}[section]

\theoremstyle{remark}
\newtheorem{remark}[theorem]{Remark}


\renewcommand{\set}[1]{\{#1\}}
\newcommand{\ab}[1]{\langle #1 \rangle}
\newcommand{\Sl}{S^{\ld}}
\newcommand{\Sln}{S^{\ld^n}}
\newcommand{\Zl}{Z^{\ld}}
\newcommand{\Zln}{Z^{\ld^n}}

\newcommand{\LM}{\Lambda_M}
\newcommand{\MMl}{\MM^{\ld}}

\newcommand{\Fi}{{\mathbf F}}

\newcommand{\XXl}{\XX^{\ld}}
\newcommand{\YYl}{\YY^{\ld}}

\newcommand{\Ql}{\QQ^{\ld}}
\newcommand{\hX}{\hat{X}}
\newcommand{\hY}{\hat{Y}}

\newcommand{\hXl}{\hX^{\ld}}

\newcommand{\hXxl}{\hX^{x,\ld}}
\newcommand{\hYyl}{\hY^{y,\ld}}
\newcommand{\hYyln}{\hY^{y_n,\ld^n}}
\newcommand{\hXln}{\hX^{\ld^n}}

\newcommand{\hXxln}{\hX^{x,\ld^n}}

\newcommand{\vl}{v^{\ld}}
\newcommand{\vln}{v^{\ld^n}}
\newcommand{\vlz}{v^{\ld^0}}

\newcommand{\ul}{u^{\ld}}
\newcommand{\uln}{u^{\ld^n}}
\newcommand{\ulz}{u^{\ld^0}}

\newcommand{\Lyl}{L^{y,\ld}}
\newcommand{\Lyln}{L^{y_n,\ld^n}}
\newcommand{\Leb}{{\mathrm{Leb}}}
\newcommand{\hYylk}{\hat{Y}^{y_k,\ld^k}}

\newcommand{\seeq}[1]{\{#1\}_{n\in\N}}


\begin{document}

\begin{center}
{\LARGE\bf Stability of utility-maximization in incomplete markets}
\end{center}

\bigskip

\noindent\begin{minipage}{0.50\textwidth}
\begin{center}
Kasper Larsen\\ Department of Mathematical Sciences, \\
  Carnegie Mellon University,\\ Pittsburgh, PA 15213, \\ {\tt
    kasperl@andrew.cmu.edu}
\end{center}
\end{minipage}
\hfill
\begin{minipage}{0.5\textwidth}
\begin{center}
Gordan {\v Z}itkovi{\' c}\\
Department of Mathematics, \\
University of  Texas at Austin, \\
Austin, TX 78712, \\
{\tt gordanz@math.utexas.edu}
\end{center}
\end{minipage}

\bigskip

\begin{center}
 \today
\end{center}

\medskip

\begin{abstract}
  The effectiveness of utility-maximization techniques for portfolio
  management relies on our ability to estimate correctly the
  parameters of the dynamics of the underlying financial assets. In
  the setting of complete or incomplete financial markets, we
  investigate whether small perturbations of the market coefficient
  processes lead to small changes in the agent's optimal behavior derived from
  the solution of the related utility-maximization problems.
  Specifically, we identify the topologies on the parameter process
  space and the solution space under which utility-maximization is
  a continuous operation, and we provide a counterexample showing that
  our results are best possible, in a certain sense.  A novel result
  about the structure of the solution of the utility-maximization
  problem where prices are modeled by continuous semimartingales is
  established as an offshoot of the proof of our central theorem.
\end{abstract}

\vspace{+20pt}

\noindent{\bf Key Words: }
appropriate topologies, 
continuous semimartingales, 
convex duality,
market price of risk process, 
mathematical finance,
utility ma\-xi\-mi\-za\-tion, 
V-relative compactness, 
well-posedness.\\
\noindent {\bf Mathematics Subject Classification (2000): } {91B16, 91B28}

\section{Introduction}
\paragraph{The Central Problem.} Financial theory in general, and
mathematical finance in particular, aim to describe and understand
the behavior of rational agents faced with an uncertain evolution
of asset prices. In the simplest, yet most widespread models of
such behavior, the agent has a fixed and immutable assessment of
various probabilities related to the future evolution of the
prices in the financial market. Taking her views as correct, the
agent proceeds to implement a dynamic trading strategy which is
chosen so as to maximize a certain nonlinear functional of the
terminal wealth - the utility functional. Often, the utility
functional is of the ``expected-utility'' type, i.e., the agent's
objective is to maximize $\UU(X_T)=\EE[U(X_T)]$ over all possible
random variables $X_T$ she can generate through various investment
strategies on a trading horizon $[0,T]$, starting from a given
initial wealth $x$.  $U(\cdot)$ is generally a concave and
strictly increasing real-valued function defined on the positive
semi-axis $(0,\infty)$, and is used as a  model of the agent's
risk preferences. In order to implement this program in practice,
the agent chooses a particular model of the evolution of asset
prices, estimates its parameters using the available market data,
and combines the obtained market specification with the particular
idiosyncratic form of the utility functional $\UU$. Having seen
how the choice of the market model requires imperfect measurement
and estimation, the natural question to ask is then the following:
\begin{quote}\em
``How are the agent's behavior and its optimality affected by
(small) misspecifications of the underlying market model?''
\end{quote}
Unless we can answer this question by a decisive {\em ``Not
much!''}, the utility-maximization framework as described above
loses its practical applicability.

In the classical setting of the theory of partial differential
equations, and applied mathematics in general, similar questions
have been posed early in the literature. It is by now a classical
methodological requirement to study the following three aspects of
every new problem one encounters:
\begin{enumerate}
\item existence,
\item uniqueness,
\item \label{ite:three}
sensitivity of the solution with respect to changes of the
  problem's input parameters.
\end{enumerate}
These criteria are generally known as {\em Hadamard's well-posedness
  requirements} (see \cite{Had02}). The present paper adopts the view
that the market model specification is one of the most important
input data in the utility-maximization problem, and focuses on the
third requirement with that in mind.

\paragraph{Existing research.} In the general setting of the
semimartingale stock-price model, the first two of the Hadamard's
requirements (existence and uniqueness) have been settled
completely by a long line of research reaching at least to Robert
Merton and continuing with the work of Chuang, Cox, He, Karatzas,
Kramkov, Lehoczky, Pearson, Pliska, Schachermayer, Shreve, Xu,
etc.\ (see \cite{Mer71}, \cite{Pli86}, \cite{CoxHua89},
\cite{HePea91}, \cite{KarLehShrXu91}, \cite{KraSch99}, merely to
scratch the surface). Tight conditions are now known on
practically all aspects of the problem which guarantee existence
and uniqueness of the optimal investment strategy. The question of
sensitivity has been studied to a much lesser degree and, compared
to the model-specification issues, much more effort has been
devoted to the perturbations of the shape of the utility function
or the initial wealth (see, e.g., \cite{JouNap04} and
\cite{CarRas05}). Related questions of  stability of option
pricing (under market perturbations) have been studied by
\cite{ElkJeaShr98}, for the case of the Samuelson's (also know as
Black-Scholes-Merton) market, and several authors have studied the
phase transition ``from discrete- to continuous-time models'',
see e.g., \cite{HubSch98} and the monograph \cite{Pri03}.

The concept of robust portfolio optimization, which has been studied
extensively in the financial and mathematical literature, is related
to our notion of stability.  The main goal of robust portfolio
optimization is to create decision rules that work well - at least up
to some degree - under each of several model specifications, or under
several probability measures (sets of beliefs) ${\mathcal
  Q}\in\mathcal{P}$ where $\mathcal{P}$ is a family of financial
models. A popular way of approaching this problem consists of allowing
for multiple model specifications, and considering investors who care
about expected utility, but in a different way in each of the possible
models. The starting point for this approach is the celebrated paper
\cite{GilSch89}, where the authors show how to relax the classical von
Neumann-Morgenstern preference axioms by introducing $X\mapsto
\inf_{{\mathcal Q}\in\mathcal{P}}\Big( \EE^{{\mathcal
    Q}}[U(X)]+\varrho({\mathcal Q})\Big)$ as the numerical
representation for the robust utility functional (see also
\cite{MMR04}). Here $X$ typically represents the terminal value of some
admissible trading strategy, and $\varrho$ assigns penalization
weights to the different possible model specifications ${\mathcal
  Q}\in\mathcal{P}$. We cannot give a complete overview of this theory
and its many aspects (one interesting property is how model ambiguity
interacts with the coefficient of risk aversion, see e.g.,
\cite{Troj2002}), but refer the reader to the textbook
\cite{FolSch02a} and the references therein.  We emphasize though,
that, while superficially similar to the robust optimization approach,
our analysis is based on the assumption that our investor firmly
believes that the original probability measure $\PP$ is correctly
specified, and does not incorporate any model ambiguity into her
optimal decision. If we view the perturbations of the model as the
perturbations of the underlying probability measure $\PP$ (via
Girsanov's theorem), one of the facets of our question of stability
can be reformulated as follows: Is the $\PP$-optimal strategy
approximately optimal for all elements in some small-enough set of
``nearby'' models ${\mathcal Q}\in{\mathcal P}$?  In other words, our
problem deals with the evaluation of the optimality properties of one
prespecified strategy in various market models, while the robust
optimization seeks a strategy with good properties under different
market models.

\paragraph{Our results.} In the present paper we investigate the
stability properties of utility-maximization in a wide class of
{\em complete or incomplete} financial models.  Specifically, we
develop a methodology which can deal with any financial market
with continuous asset prices, without restrictions on the
underlying filtration. In the setting of such models (described in
detail below, and including Samuelson's model as well as
stochastic volatility models) the concept of the
market-price-of-risk can be defined in an unambiguous way.
Moreover, one of our main technical results states that in these
models the maximal dual elements (in the sense of \cite{KraSch99})
are {\em local martingales} and admit a multiplicative
decomposition into a ``minimal local martingale density'', and an
``orthogonal part''. As a consequence, we show that in the setting
of the dual approach to utility-maximization, the dual optimizer
is always a local martingale when the stock price is continuous.
This extends a similar result from \cite{KarZit03} stated in the
more restrictive milieu of It\^o-process models.

When the model under scrutiny allows for a notion of volatility,
the market-price-of-risk can be interpreted as the drift, weighted
by a negative power of the volatility. In particular,
misspecifications of the market-price-of-risk translate into
homothetic misspecifications in the drift process. \cite{Rog01}
discusses the practical difficulties related to estimating the
drift and points out that the magnitude of the error attached to
the drift estimate is significant.  The continuity of the value
function, as well as the optimal terminal wealth of a
utility-maximizing agent - seen as functions of the
market-price-of-risk - constitute the center of our attention.
Therefore, our analysis is to be seen as stability with respect to
small drift misspecifications and hopefully provide some insight
also into the more complicated problem of large misspecifications
that \cite{Rog01} points at.

The value function of our utility-maximization problem 
takes values in the Euclidean space $\R$ and
there is little discussion about the proper notion of continuity
there. However, the market-price-of-risk (in the domain), and the
optimal terminal wealths (in the co-domain), are more complicated
objects (a stochastic process and a random variable), and present
us with a variety of choices for the topology under which the
notion of ``perturbation'' can be interpreted. One of the
contributions of this paper is to identify a class of topologies
on the domain, and a particular topology (of convergence in
probability) on the co-domain, under which utility-maximization
becomes a continuous operation when a simple condition of
$V$-relative compactness is satisfied. Under the additional assumption
that all the markets under consideration are complete, we show that
$V$-relative compactness is, in fact, both necessary and sufficient.
Moreover, we provide an example, set
in a complete It\^ o-process financial market, in which a very
strong convergence requirement imposed on the market-price-of-risk
processes still fails to lead to any kind of convergence of the
corresponding optimal terminal wealths. 

On the technical side, the proof of our main stability result
requires an analysis of the structure of the solution of the
utility-maximization problem. Specifically, a recourse to
convex-duality techniques is of great importance; most of the
intermediate steps leading to the final result deal with the dual
optimization problem and its properties, and  for every continuity
result in the primal problem, there is a corresponding continuity
result in the dual. It is in the heart of the duality approach in
convex optimization that one can choose whether to work on the
primal or the dual problem - depending on which one is more
amenable to analysis in a particular situation - and easily
translate the obtained results to the other one. In our case, the
advantage of the dual problem is that certain close substitutes
for  {\em compactness} (such as the use of Komlos' lemma) bring a
number of topological techniques into play. One of the
mathematical messages of this paper is that the use of duality
theory is not restricted to the existence results only, but can be
put to a more versatile use.

The structure of the paper follows a simple template: The next
section describes the modeling framework, poses the problem and
states the main results. Section 3 invokes some important facts
about the convex-duality treatment of utility-maximization
problems and provides a proof of the main result through a sequence of lemmas.
The Appendix contains an auxiliary result exemplifying the notion
of appropriate topology.

\section{The Problem Formulation and the Main Results}
\subsection{The model framework}
Let $(\Omega,\FF,\PP)$ be a complete probability space, and let
$\Fi=\prf{\FF_t}$, be a filtration satisfying the usual conditions.
 For  a continuous $\Fi$-local martingale $M=\prf{M_t}$,  let
$\Lambda$ denote the set of all predictable processes
$\ld=\prf{\ld_t}$ with the property that \[\int_0^T \ld_u^2\,
d\ab{M}_u<\infty,\ \text{a.s.},\] where, as usual,
$\ab{M}=\prf{\ab{M}_t}$ denotes the quadratic variation of the
local martingale $M$. Each $\ld\in\Lambda$ defines a continuous
semimartingale $\Sl$, where
\begin{equation}
   \label{equ:models}
   \begin{split}
\Sl_t=1+M_t+\int_0^t \ld_u\, d\ab{M}_u,\ t\in [0,T].
   \end{split}
\end{equation}
Together with the trivial bond-price process $B_t\equiv 1$, $\Sl$
constitutes
a financial market. In the sequel, we will simply write {\em the market
  $\Sl$}.
\begin{example}
The proto-example for the family $\sets{\Sl}{\ld\in\Lambda}$ is the
class of It\^o-process markets of the form
\begin{equation}
   \nonumber
   \begin{split}
d\Sl_t=\Sl_t\big( \mu^{\ld}_t\, dt+\sigma_t dB_t\big),\ \Sl_0=1, \text{where }
\mu^{\ld}_t=
\ld_t \sigma^2_t,
   \end{split}
\end{equation}
defined on the filtration $\Fi=\prf{\FF_t}$, generated either by a
Brownian motion $B$, or by a pair $(B,W)$ of independent Brownian
motions. In the first case the market is complete, but in the
second case the market is incomplete. The important continuous
models of financial markets such as the Samuelson's model, or the
class of stochastic volatility models, fall within this framework.
\end{example}
\begin{remark} The choice of the special form for  the model class in
\eqref{equ:models} is not arbitrary. In fact, it is a consequence
of the main result of \cite{DelSch95}, that {\em any}
continuous arbitrage-free (num\'eraire-denominated) model of a stochastic
market admits the representation \eqref{equ:models}.
\end{remark}
\subsection{Absence of arbitrage and its consequences}
For $\lambda \in \Lambda$ the stochastic exponential process
$\Zl=\prf{\Zl_t}$, given by
\begin{equation}
   \nonumber
   \begin{split}
\Zl_t=\EN(-\ld\cdot M)_t=\exp\left(-\int_0^{t} \ld_u\,
dM_u-\frac{1}{2} \int_0^{t} \ld_u^2\, d\ab{M}_u\right),\ t\in
[0,T].
   \end{split}
\end{equation}
is a strictly positive local martingale and acts as a
state-price-deflator for $\Sl$. More precisely, It\^o's formula
implies that the process $\Zl X$ is a local martingale for each
semimartingale $X$ of the form $X=H\cdot \Sl$, whenever $H$ is a
predictable and $\Sl$-integrable (i.e., $H\in L(\Sl)$). When $\Zl$
is a genuine martingale, the measure $\Ql\sim\PP$ defined by
\begin{equation}
   \nonumber
   \begin{split}
\RN{\Ql}{\PP}=\Zl_T
   \end{split}
\end{equation}
is a probability measure under which the stock-price process $\Sl$
is a local martingale. In that case, the market $\Sl$ satisfies
the condition of No Free Lunch with Vanishing Risk (NFLVR). It is
customary to call $\Ql$ the {\em minimal local martingale
measure}. In general, the set $\MMl$ of equivalent local
martingale measures (i.e., all probability measures $\QQ$,
equivalent to $\PP$, under which the process $\Sl$ is a local
martingale) is larger than just a singleton. The following result
is a direct consequence of Theorem 1 in \cite{Sch95a}, which, in
turn, is a generalization of the results in \cite{AnsStr92} and
\cite{AnsStr93}.
\begin{proposition}[Schweizer, Ansel, Stricker]
  \label{pro:Schweizer-structure-of-M}
When $\MMl\not=\emptyset$, every probability
measure $\QQ\in\MMl$ has the  form
\begin{equation}
   \nonumber
   \begin{split}
\frac{d\QQ}{d\PP}=\Zl_T \EN(L)_T,
   \end{split}
\end{equation}
for some local martingale $L$ strongly orthogonal to $M$, meaning
$\langle L,M\rangle \equiv 0$.
\end{proposition}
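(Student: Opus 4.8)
The plan is to start from an arbitrary $\QQ\in\MMl$ and reconstruct its density from the semimartingale structure of $\Sl$. Since $\QQ\sim\PP$, the density process $R_t=\EE[\RN{\QQ}{\PP}\mid\FF_t]$ is a strictly positive uniformly integrable $\PP$-martingale with $R_0=1$, so by the Dol\'eans--Dade equation it can be written as $R=\EN(N)$, where $N=\int_0^{\cdot}R_{u-}^{-1}\,dR_u$ is a $\PP$-local martingale with $N_0=0$. Because the filtration is not assumed to be continuous, $N$ may carry a nontrivial purely discontinuous part; I would split $N=N^c+N^d$ and record that, since $M$ is continuous, the covariation $\ab{M,N}=\ab{M,N^c}$ exists and is a continuous process of finite variation.

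The decisive step is to exploit that $\Sl$ is a $\QQ$-local martingale. By Girsanov's theorem the process $M-\ab{M,N}$ is a continuous $\QQ$-local martingale, so under $\QQ$ one has the decomposition
\[
\Sl_t=1+\big(M_t-\ab{M,N}_t\big)+\Big(\ab{M,N}_t+\int_0^t\ld_u\,d\ab{M}_u\Big)
\]
into a continuous $\QQ$-local martingale and a continuous process of finite variation. Since a continuous semimartingale which is a local martingale has a vanishing finite-variation part, we are forced to have
\[
\ab{M,N}_t=-\int_0^t\ld_u\,d\ab{M}_u,\qquad t\in[0,T].
\]
This identity is the heart of the argument: it records precisely how the density of any $\QQ\in\MMl$ must act on $M$ in order to kill the drift of $\Sl$.

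With this in hand I would set $L:=N+\ld\cdot M$, which is a $\PP$-local martingale because $\ld\in\Lambda$ guarantees that $\ld\cdot M$ is well defined. Its strong orthogonality to $M$ is then a one-line computation:
\[
\ab{L,M}=\ab{N,M}+\ab{\ld\cdot M,M}=-\int_0^{\cdot}\ld_u\,d\ab{M}_u+\int_0^{\cdot}\ld_u\,d\ab{M}_u=0 .
\]
Finally, since $\ab{\ld\cdot M,L}=\int\ld\,d\ab{M,L}=0$, Yor's formula for the product of stochastic exponentials gives $\Zl\,\EN(L)=\EN(-\ld\cdot M)\,\EN(L)=\EN(-\ld\cdot M+L)=\EN(N)=R$, and evaluating at $t=T$ yields $\RN{\QQ}{\PP}=\Zl_T\,\EN(L)_T$, as claimed.

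The part requiring the most care is the treatment of the purely discontinuous component $N^d$ in a general filtration: one must check that $N^d$ contributes nothing to $\ab{M,N}$ (it does not, because $M$ is continuous) and that it is harmlessly absorbed into $L$ without spoiling either the local-martingale property of $L$ or the orthogonality relation $\ab{L,M}\equiv0$. Everything else reduces to Girsanov's theorem, the uniqueness of the canonical decomposition of a continuous semimartingale, and Yor's exponential formula; alternatively, the assertion can be read off directly from Theorem~1 in \cite{Sch95a}.
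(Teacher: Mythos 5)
Your proof is correct, and it takes a genuinely different route from the paper: the paper gives no argument of its own, obtaining the proposition as a direct consequence of Theorem~1 in \cite{Sch95a} (itself a generalization of \cite{AnsStr92} and \cite{AnsStr93}), whereas you reconstruct the decomposition from scratch. Your chain --- Dol\'eans--Dade representation $R=\EN(N)$ of the density process, Girsanov--Meyer to make $M-\ab{M,N}$ a $\QQ$-local martingale, uniqueness of the canonical decomposition of the continuous semimartingale $\Sl$ under $\QQ$ (where it is a local martingale) to force $\ab{M,N}=-\int_0^{\cdot}\ld_u\,d\ab{M}_u$, then $L:=N+\ld\cdot M$ and Yor's formula --- is sound, and each step is used where it should be. The argument leans on the continuity of $M$ exactly where you say it does: brackets of $M$ against the purely discontinuous part $N^d$ vanish, $[M,N]=\ab{M,N}$ is continuous (hence predictable and equal to its compensator, so the strong orthogonality $\ab{L,M}\equiv 0$ is well posed), and the cross-variation $[-\ld\cdot M,L]$ vanishes so the two exponentials multiply into $\EN(N)=R$. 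This is precisely what makes the continuous case elementary; Schweizer's cited theorem treats general c\`adl\`ag price processes, where jumps entangle the orthogonal part with the minimal density and the proof is substantially harder. So the citation buys the paper generality and brevity, while your proof buys a transparent, self-contained argument that is fully adequate for this paper's continuous setting. One implicit hypothesis deserves mention: writing $R=\EN(N)$ with $N_0=0$ uses $R_0=1$, i.e., triviality of $\FF_0$ up to null sets --- a standing convention shared implicitly by the paper (otherwise the factor $R_0$ would have to be absorbed into the statement) rather than a gap in your reasoning.
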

It is an unexpected result of \cite{DelSch98a} that the market
$\Sl$ can satisfy NFLVR, without the density process
$\Zl$ having  the martingale property. In that case, the minimal
martingale measure does not exist.

We do not postulate that the process $\Zl$ is a (uniformly
integrable) martingale. Instead, we restrict our attention to the
set $\LM\subseteq \Lambda$, containing all $\ld\in\Lambda$ such
that the financial market $\Sl$ admits NFLVR.  The existence of an
equivalent martingale measure for the process $\Sl$, $\ld\in\LM$,
now follows from the celebrated Fundamental Theorem of Asset
Pricing of Delbaen and Schachermayer (\cite{DelSch94}).
\begin{remark}
Even though we will only consider $\ld\in\LM$, all the results in
the sequel can be extended to the most general case
$\ld\in\Lambda$. Admittedly, in this general case, the markets
under consideration will not be arbitrage free in the sense of
NFLVR, but the existence of a strictly positive
state-price-deflator $\Zl$ turns out to be enough. This (mild)
generalization would add to the technicalities of the proofs
without adding much to the content, so we have choosen not to
pursue it.
\end{remark}

\subsection{The utility-maximization problem}
\begin{definition}
\label{def:utility}
A strictly
concave, strictly increasing $C^1$-function $U:(0,\infty)\to\R$ satisfying
the Inada conditions:
\begin{equation}
   \nonumber
   \begin{split}
\lim_{x\to 0} U'(x)=+\infty,\ \lim_{x\to\infty} U'(x)=0,
   \end{split}
\end{equation}
as well as the reasonable asymptotic elasticity condition $AE[U]<1$, where
\begin{equation}
   \nonumber
   \begin{split}
AE[U]=\begin{cases} \limsup_{x\to\infty}\frac{xU'(x)}{U(x)},&
\lim_{x\to\infty} U(x)=+\infty,\\ 0,& \text{otherwise.}
\end{cases}
   \end{split}
\end{equation}
is called a {\em reasonably elastic utility function}.
\end{definition}
\begin{remark}
As usual, we extend the utility function $U$ to the negative
semi-axis by defining $U(x)=-\infty$ for negative $x$-values.
\end{remark}
Given a financial market $\Sl$ with $\ld\in\Lambda$, the {\em
utility-maximization problem} for a financial agent with initial
wealth $x>0$ (and the risk attitude described by the utility
function $U$) is to maximize the expected utility $\EE[U(X_T)]$ over
all terminal values of the wealth processes obtainable by trading in
the stock $\Sl$ and investing in the risk-free security in a
self-financing manner. More precisely, the utility-maximization
problem is posed through its value function $\ul:\R_+\to\R$, where
\begin{equation}
   \label{equ:UMP}
   \begin{split}
\ul(x)=\sup_{X\in\XXl(x)}\EE[U(X_T)].
   \end{split}
\end{equation}
and $\XXl(x)$ is the usual class of wealth processes constrained by
an admissibility requirement in order to rule out the doubling
strategies
\begin{equation}
   \nonumber
   \begin{split}
\XXl(x)=\sets{ x+H\cdot \Sl}{H\in L(\Sl),\ x+H\cdot \Sl \text{ is a non-negative process}}.   \end{split}
\end{equation}
A number of authors have studied the problem \eqref{equ:UMP} on
various levels of generality. Culminating with \cite{KraSch99},
this line of research has established a natural set of regularity
assumptions on the market and on the utility function, under which
\eqref{equ:UMP} admits a unique solution $\prf{\hXxl_t}\in
\XXl(x)$, and the value function $x\mapsto \ul(x)=\EE[U(\hXxl_T)]$
is finite-valued and continuously differentiable.
\subsection{The central problem}
Now that we have introduced all the needed elements, we can pose
our stability problem for the utility-maximization problem
\begin{problem} Given an initial wealth  $x>0$,
let the sequence $\{\lambda^n\}_{n\in\mathbb{N}}$ in $\Lambda_M$
converging to $\ld^0\in\Lambda_M$ is some topology.
 Under which conditions on
the sequence $\{\lambda^n\}_{n\in\mathbb{N}}$ and the topology in
which it converges to $\ld^0$, will
\begin{enumerate}
\item the value functions $\uln(x)$,
\item the optimal terminal wealths $\hXxln_T$,
\end{enumerate}
converge to the corresponding value function $u^{\lambda^0}(x)$
and the corresponding optimal terminal wealth
$\hat{X}^{x,\lambda^0}_T$?
\end{problem}
\subsection{Appropriate topologies}
Before we give a precise statement of our main result, we
introduce and comment on a class of topologies in the present
subsection, as well as the concept of $V$-relative compactness, in
the following subsection. Standardly, $\lzer$ denotes the set of
all (equivalence classes) of $\FF$-measurable finite-valued random
variables and $\lzer_+$ denotes its positive cone.
\begin{definition}
\label{def:appropriate} A metrizable topology $\tau$ on $\Lambda$
is said to be {\em appropriate} if the mapping  $\ld\mapsto \Zl_T$
of $\Lambda$ into $\lzer_+$ is continuous, when $\Lambda$ is
endowed with $\tau$, and $\lzer$ with the topology of convergence
in probability.
\end{definition}

\begin{remark}
The requirement of metrizability in the Definition
\ref{def:appropriate} is imposed only to simplify the analysis
below as it allows us to circumvent the use of nets. Any topology
for which $\ld\mapsto\Zl_T$ is continuous can be weakened to a
metrizable topology with the same property.
\end{remark}
The following example describes two natural appropriate topologies.
\begin{example}
\label{exa:appropriate}
\
\begin{enumerate}
\item Let the positive measure $\mu^M$, defined on predictable
  $\sigma$-algebra on the product space
$[0,T]\times\Omega$,
be given by
\begin{equation}%
\label{equ:mu-A}
    \begin{split}
\mu^M(A)=\EE\int_0^T \ind{A}(t) \,d\ab{M}_t.
    \end{split}
\end{equation}
Proposition \ref{pro:conv-Fatou} in the Appendix
states that the restriction of the $\ltwo(\mu^M)$-norm
\begin{equation}
   \nonumber
   \begin{split}
\norm{\ld}_{\ltwo{(\mu^M)}}^2=\EE\int_0^T \ld_u^2\, d\ab{M}_u,
   \end{split}
\end{equation}
onto $\sets{\ld\in\Lambda}{ \norm{\ld}_{\ltwo(\mu^M)} <\infty}$
induces an appropriate topology.
\item Another example of an appropriate topology is the so-called \emph{ucp}-topology (uniform
convergence on compact sets in probability), when restricted to
left-continuous processes in
  $\LM$. In other words, a sequence $\seeq{\ld^n}$ converges to $\ld$ in ucp if the
  sequence
\[ \sup_{t\in [0,T]} \abs{\ld^n_t-\ld_t} \] of
random variables converges to $0$ in probability. For more
information about the ucp-toplogy, see Section II.4 in
\cite{Pro04}.
\end{enumerate}
\end{example}

\subsection{The $\log$-example}

To the reader in acquiring a  better understanding of our main result,
we provide a simple example that
illustrates the use of appropriate topologies in  the
$\mathbb{L}^2(\mu^M)$-class, i.e.,  the  sequence of models with
square integrable market-price-of-risk processes, $\lambda^n\in
\mathbb{L}^2(\mu^M)$ for $n\in\N$. We consider an investor with
$U(x)=\log(x)$ (so-called {\em $log$-investor}). It is well-known that
that her behavior is myopic, in the sense that the 
optimal wealth is given by 
\begin{align}\label{eq:GOP}
 \hat{X}^{x,\lambda^n}_T = \frac{x}{Z^{\lambda^n}_T}.
\end{align}
Thanks to Proposition \ref{pro:conv-Fatou}, 
if $\lambda^n \to \lambda^0$ in $\mathbb{L}^2(\mu^M)$,
then $Z^{\lambda^n}_T \to Z^{\lambda^0}_T$. Consequently, 
 for the optimal wealths, given by \eqref{eq:GOP}, we have 
$\hat{X}^{x,\lambda^n}_T \to \hat{X}^{x,\lambda^0}_T$ in probability. 
Furthermore, inserting \eqref{eq:GOP} into
\eqref{equ:UMP} yields the following expression for the value
function
\begin{align*} u^n(x) =
\EE\left[\log\left(\hat{X}^{x,\lambda^n}_T\right)\right] =
\log(x)+
\EE\left[\int_0^T\lambda^n_udM_u+\frac12\int_0^T\left(\lambda^n_u\right)^2d\langle
M\rangle_u\right].
\end{align*}
Since  $\lambda^n \in \mathbb{L}^2(\mu^M)$ for all $n\in\N$, 
the
 stochastic integral in the expression above is a 
genuine martingale and 
the following representation holds
$$
u^n(x)=\log(x)+\frac12 \| \lambda^n\|^2_{\mathbb{L}^2(\mu^M)}.
$$
This relation shows that the requirement $\lambda^n \in
\mathbb{L}^2(\mu^M)$ grants finiteness of the value function $u^n$. It
also implies that the convergence $\lambda^n\to\lambda^0$ in
$\mathbb{L}^2(\mu^M)$ implies pointwise convergence of the value
functions $u^n(\cdot)$ to $u^0(\cdot)$.

For an investor with a general utility function $U(\cdot)$, the
corresponding optimizer $\hat{X}^{x,\lambda}_T$ can be a lot more
complicated than \eqref{eq:GOP}, and, as we illustrate, more
regularity needs to be imposed in order to obtain positive
results. This is the content of the next subsection.

\subsection{$V$-relative compactness}
A reasonably elastic  utility function $U$ (as in Definition
\ref{def:utility}) is linked via conjugacy to its {\em
Legendre-Fenchel transform} $V:(0,\infty)\to\R$ given by
\begin{equation}
   \nonumber
   \begin{split}
V(y)=\sup_{x>0} \big( U(x)-xy\big).
   \end{split}
\end{equation}
\begin{definition}
\label{def:V-relative-weak-compactness}
A subset $\Lambda'$ of $\Lambda$ is said to be {\em $V$-relatively
  compact} if the following family of random variables
\begin{equation}
   \label{equ:V-of-Z}
   \begin{split}
     \sets{V(\Zl_T)}{\ld\in\Lambda'}
   \end{split}
\end{equation}
is uniformly integrable.
\end{definition}
\begin{remark}
It is enough to replace $V(\Zl_T)$ by
  $V^+(\Zl_T)=\max(V(\Zl_T),0)$ in \eqref{equ:V-of-Z}.  Indeed,
  the family $\sets{\Zl_T}{\ld\in\Lambda}$ is contained in the unit ball of $\lone$, and
  concavity properties of the
  function $V^-(\cdot)=\max(0,-V(\cdot))$ can be used to conclude that
  $\sets{V^-(\Zl_T)}{\ld\in\Lambda}$ is uniformly integrable (see the first part
  of the proof of Lemma 3.2, p.~914 in \cite{KraSch99} for more details).
\end{remark}

\subsection{The main result}
\begin{theorem}
 \label{thm:main}
 Let $\Lambda'$ be a $V$-relatively compact subset of $\Lambda_M$, and
 let $\tau$ be an appropriate topology. Then for any
 $\ld\in\Lambda'$, the function $\ul:(0,\infty)\to\R$ is
 finite-valued, and for each $x>0$ there exists an a.s.-unique optimal
 terminal wealth $\hXxl_T$ (the last element of the wealth process
$\hXxl\in\XXl(x)$), for the utility
 maximization problem \eqref{equ:UMP}. Moreover, the mappings
\begin{equation}
   \nonumber
   \begin{split}
\Lambda' \times (0,\infty) \ni (\ld,x) & \mapsto \ul(x)\in\R, \text{
  and } \\
\Lambda' \times (0,\infty) \ni (\ld,x) & \mapsto \hXxl_T \in \lzer_+
   \end{split}
\end{equation}
are jointly continuous when $\Lambda'$ is equipped with $\tau$, and
$\lzer$ with the  topology of  convergence in
probability.
\end{theorem}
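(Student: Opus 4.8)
The plan is to carry out the entire analysis on the dual side and then transfer the conclusions to the primal problem by conjugacy; this is dictated by the fact that an appropriate topology gives direct control only over the terminal deflators $\Zln_T$, through Proposition \ref{pro:conv-Fatou}, and not over the wealth processes, whose generating stock $\Sl$ depends on $\ld$ in a way the topology $\tau$ does not see pathwise. Throughout I write $V$ for the conjugate of $U$ and $\vl(y)$ for the dual value function, the infimum of $\EE[V(y\,d\QQ/d\PP)]$ over $\QQ\in\MMl$ and its Fatou closure in the sense of \cite{KraSch99}. As a preliminary step, reasonable elasticity of $U$ translates into a growth bound on $V$ of the form $V(cy)\le\kappa(c)\big(1+V(y)\big)$ with $\kappa(c)$ locally bounded in $c$; combined with the uniform integrability furnished by $V$-relative compactness (Definition \ref{def:V-relative-weak-compactness}) this upgrades to uniform integrability of the relevant families $\set{V(c_\ld\Zl_T N_\ld)}$ when the scalars $c_\ld$ stay bounded and the multipliers $N_\ld$ are terminal values of orthogonal exponentials with controlled integrability. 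This, together with \cite{KraSch99}, secures finiteness of $\vl$ and $\ul$ on $\Lambda'\times(0,\infty)$ and the existence and a.s.-uniqueness of the primal and dual optimizers.

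First I would prove upper semicontinuity of the dual value function, $\limsup_n\vln(y_n)\le\vlz(y_0)$ whenever $\ld^n\to\ld^0$ in $\tau$ and $y_n\to y_0$. The decisive structural observation — and the reason incompleteness is not an obstruction — is Proposition \ref{pro:Schweizer-structure-of-M}: every density in $\MMl$ factors as $\Zl_T\EN(L)_T$ with $L$ strongly orthogonal to the \emph{fixed} local martingale $M$. Hence, given a near-optimal orthogonal exponential $\EN(L)_T$ for the limit market $\ld^0$, the \emph{same} $L$ produces an admissible dual element $y_n\Zln_T\EN(L)_T$ in every perturbed market $\ld^n$, precisely because orthogonality is to $M$ and does not notice the change of $\ld$. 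Since $\Zln_T\to Z^{\ld^0}_T$ in probability and $y_n\to y_0$, these elements converge in probability, and the uniform-integrability estimate above lets me pass to the limit inside $\EE[V(\cdot)]$, yielding $\limsup_n\vln(y_n)\le\EE[V(y_0Z^{\ld^0}_T\EN(L)_T)]$, which is within $\eps$ of $\vlz(y_0)$.

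The hard direction is lower semicontinuity, $\liminf_n\vln(y)\ge\vlz(y)$, and this is where I expect the main obstacle. I would take the dual optimizers $\hYyln_T$ — which, by the local-martingale structure of the maximal dual elements established here, have the form $y\Zln_T\EN(\hat L^n)_T$ — and apply a Koml\'os-type argument to the orthogonal factors $\EN(\hat L^n)_T=\hYyln_T/(y\Zln_T)$, extracting forward convex combinations converging a.s.\ to some $\bar N$. The point is that $yZ^{\ld^0}_T\bar N$ lies in the convex, Fatou-closed dual feasible set of the limit market $\ld^0$, again because each $\EN(\hat L^n)_T$ is an orthogonal exponential with respect to the same $M$ and therefore legitimately pairs with $Z^{\ld^0}_T$. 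The delicate part is to show that the matching convex combinations of the genuine optimizers $\hYyln_T=y\Zln_T\EN(\hat L^n)_T$ have the same a.s.\ limit as $yZ^{\ld^0}_T\bar N$; this is an \emph{approximate factorization} requiring control of the error $\sum_i\alpha_i\big(Z^{\ld^i}_T-Z^{\ld^0}_T\big)\EN(\hat L^i)_T$, where the merely-in-probability convergence $\Zln_T\to Z^{\ld^0}_T$ coming from $\tau$ must be balanced against the integrability of the $\EN(\hat L^i)_T$ to overcome the concentration of the Koml\'os weights on large indices. Granting this, convexity of $V$ together with Fatou (and the uniform integrability of $V^-$) gives $\liminf_n\vln(y)\ge\EE[V(yZ^{\ld^0}_T\bar N)]\ge\vlz(y)$.

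Once pointwise convergence $\vln(y)\to\vlz(y)$ is in hand, joint continuity follows cheaply: the functions $\vl$ are convex in $y$, so pointwise convergence is automatically locally uniform on $(0,\infty)$, and passing to conjugates via $\ul(x)=\inf_{y>0}\big(\vl(y)+xy\big)$ transfers this to locally uniform convergence of $\uln\to\ulz$, whence joint continuity of $(\ld,x)\mapsto\ul(x)$. For the optimal wealths I would use the marginal relation $\hXxl_T=-V'(\hYyl_T)$ with $y=(\ul)'(x)$: the a.s.-uniqueness of the dual optimizer upgrades the Koml\'os convex-combination convergence of the previous paragraph to convergence in probability of the full sequence $\hYyln_T\to\hY^{y_0,\ld^0}_T$, and continuity of the decreasing map $-V'$ then yields $\hXxln_T\to\hX^{x,\ld^0}_T$ in probability. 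Joint continuity in $(\ld,x)$ finally follows by combining this with the continuity of $x\mapsto(\ul)'(x)$ supplied by the locally uniform convergence of the convex value functions.
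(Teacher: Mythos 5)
Your proposal follows the same overall route as the paper: everything is carried out on the dual side, the orthogonal-exponential structure of dual elements (Proposition \ref{pro:structure-of-supermartingales}) is used both to transplant test elements between markets (upper semicontinuity) and to feed a Koml\'os argument (lower semicontinuity), and the primal statements are then recovered via conjugacy and the first-order condition. Two remarks on the dual part. First, your uniform-integrability claim for the upper-semicontinuity direction silently requires the near-optimal orthogonal exponential to be bounded \emph{away from zero} (so that $V^+(y\Zl_T\EN(L)_T)\le V^+(y'\Zl_T)$ for some $y'>0$, after which asymptotic elasticity and $V$-relative compactness apply); the fact that restricting the dual infimum to such exponentials loses nothing is exactly Corollary \ref{cor:infimum-over-bounded}, which itself needs a proof. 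Second, the ``approximate factorization'' step you leave as granted has an easy resolution, and it is not the integrability-balancing you suggest: since you are proving lower semicontinuity along a sequence realizing the liminf, you may pass to a further subsequence along which $\Zln_T\to Z^{\ld^0}_T$ almost surely, and then the error is dominated pointwise by $\bigl(\sup_{i\ge n}\abs{Z^{\ld^i}_T-Z^{\ld^0}_T}\bigr)\cdot\sum_i\alpha^n_i\,\EN(\hat L^i)_T$, whose second factor converges a.s.\ to the finite Koml\'os limit $\bar N$; hence the error vanishes a.s. (The paper performs the same manipulation in mirror image: Koml\'os is applied to the $\lone$-bounded products $y_k Z^{\ld^k}_T\EN(L^k)_T$, and the orthogonal factors are recovered by division, Lemma \ref{lem:mini}.)

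The genuine gap is in your final paragraph. You claim that the a.s.-uniqueness of the dual minimizer ``upgrades'' the convergence of Koml\'os convex combinations to convergence in probability of the full sequence $\hYyln_T\to\hYyl_T$. That implication is false: for an i.i.d.\ sequence $\xi_n$ with $\PP[\xi_n=0]=\PP[\xi_n=2]=1/2$, every subsequence admits forward convex combinations (e.g.\ Ces\`aro means) converging a.s.\ to the constant $1$, so the limit is unique, yet $\xi_n$ does not converge in probability. Uniqueness identifies what the convex combinations converge to; it says nothing about the sequence itself. What closes this step --- and it is the technical heart of the continuity of $(x,\ld)\mapsto\hXxl_T$ --- is a \emph{quantitative} form of uniqueness coming from the strict convexity of $V$: writing $\hYyln_T=\Zln_T H^n$ and transplanting the competitor $\Zln_T H^{\delta}$, $H^{\delta}=(1-\delta)H+\delta$ (with $H$ the orthogonal factor of the limit optimizer), into the market $\ld^n$, where it is again a dual element, one shows that if $y_n\Zln_T\abs{H^n-H^{\delta}}>\eps$ on a set of probability $p$ then the midpoint $\tfrac12\Zln_T(H^n+H^{\delta})\in\YY^{\ld^n}$ would undercut the optimal value $\vln(y_n)$ by roughly $\beta p$ for a fixed $\beta=\beta(\eps,N)>0$; since $\EE[V(y_n\Zln_T H^{\delta})]\to\EE[V(y\Zl_T H^{\delta})]$ (Lemma \ref{lem:continuous-in-probability}) and $\vln(y_n)\to\vl(y)$ (Proposition \ref{pro:value-functions-converge}) differ in the limit by at most $O(\delta)$, this forces $y_n\Zln_T\abs{H^n-H}\to0$ in probability. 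This is precisely the paper's Lemma \ref{lem:continuity-of-terminal-wealths}; nothing in your proposal substitutes for it, and note in addition that your optimizers live in \emph{different} dual domains $\YY^{\ld^n}$, so even a fixed-problem ``near-optimizers converge'' lemma would not apply without this transplantation.
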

In the special case of complete markets, we have the following
converse of Theorem \ref{thm:main}.
\begin{proposition} \label{pro:converse}
Let $\{\ld^n\}_{n\in\N_0}$ be a sequence in $\LM$ such that each
  $\ld^n$ defines a complete market, i.e.,
  $\MM^{\ld^n}=\set{\QQ^{\ld^n}}$.
   Suppose
  that $\uln(x)\to u^{\ld^0}(x)$ and $\hXxln_T\to\hat{X}^{x,\lambda^0}_T$ in probability,
  for all $x>0$. Then the sequence $\{Z^{\ld^n}_T\}_{n\in\N}$ is
  $V$-relatively compact and $\ld^n\to\ld^0$ in all appropriate
  topologies.
\end{proposition}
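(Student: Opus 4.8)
The plan is to exploit completeness, which reduces the utility-maximization problem to the classical convex-duality problem over a single pricing measure. Since $\MM^{\lambda^n}=\{\QQ^{\lambda^n}\}$, the density $Z^{\lambda^n}_T$ is the unique state-price density, $Z^{\lambda^n}$ is a genuine martingale, and the optimizer admits the explicit representation $\hat{X}^{x,\lambda^n}_T=I(y_n Z^{\lambda^n}_T)$, where $I=(U')^{-1}$ and the Lagrange multiplier $y_n=(u^{\lambda^n})'(x)$ is pinned down by the budget constraint $\EE[Z^{\lambda^n}_T\hat{X}^{x,\lambda^n}_T]=x$; the dual value function is simply $v^{\lambda^n}(y)=\EE[V(yZ^{\lambda^n}_T)]$. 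My first step would be to prove $y_n\to y_0$. Each $u^{\lambda^n}$ is finite (the optimizer exists) and concave on $(0,\infty)$, and the hypothesis gives pointwise convergence $u^{\lambda^n}\to u^{\lambda^0}$; since the limit is continuously differentiable (reasonable elasticity), the standard stability of superdifferentials of concave functions under pointwise convergence yields $(u^{\lambda^n})'(x)\to(u^{\lambda^0})'(x)$, i.e. $y_n\to y_0:=(u^{\lambda^0})'(x)>0$.

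Next I would extract convergence of the densities. Inverting the optimality relation gives $y_n Z^{\lambda^n}_T=U'(\hat{X}^{x,\lambda^n}_T)$; because $\hat{X}^{x,\lambda^n}_T\to\hat{X}^{x,\lambda^0}_T$ in probability with a strictly positive limit and $U'$ is continuous, we obtain $y_n Z^{\lambda^n}_T\to U'(\hat{X}^{x,\lambda^0}_T)=y_0 Z^{\lambda^0}_T$ in probability, and dividing by $y_n\to y_0>0$ yields $Z^{\lambda^n}_T\to Z^{\lambda^0}_T$ in probability. This is exactly the convergence that any appropriate topology is designed to detect through the continuous map $\lambda\mapsto Z^{\lambda}_T$. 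Simultaneously, the conjugacy identity $v^{\lambda^n}(y_n)=u^{\lambda^n}(x)-x\,y_n$ together with $y_n\to y_0$ delivers convergence of the dual values $\EE[V(y_n Z^{\lambda^n}_T)]=v^{\lambda^n}(y_n)\to v^{\lambda^0}(y_0)=\EE[V(y_0 Z^{\lambda^0}_T)]$.

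For $V$-relative compactness I would run a Vitali/Scheff\'e-type argument on $W_n:=V(y_n Z^{\lambda^n}_T)$. By the previous step $W_n\to W_0:=V(y_0 Z^{\lambda^0}_T)$ in probability and $\EE[W_n]\to\EE[W_0]$. The negative part $W_n^-=V^-(y_n Z^{\lambda^n}_T)$ is uniformly integrable by the standard estimate (the family $\{y_n Z^{\lambda^n}_T\}$ is bounded in $\lone$ and $V^-$ has controlled growth; compare the first part of the proof of Lemma 3.2 in \cite{KraSch99}), so $\EE[W_n^-]\to\EE[W_0^-]$; subtracting gives $\EE[W_n^+]\to\EE[W_0^+]<\infty$. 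Since $W_n^+\ge 0$, $W_n^+\to W_0^+$ in probability, and the means converge to the finite mean of the limit, the nonnegative Vitali/Scheff\'e lemma forces $\{W_n^+\}$ to be uniformly integrable. Finally I would transfer from the argument $y_n Z^{\lambda^n}_T$ back to $Z^{\lambda^n}_T$: writing $Z^{\lambda^n}_T=(1/y_n)\,(y_n Z^{\lambda^n}_T)$ with $1/y_n$ in a compact subset of $(0,\infty)$, the growth estimates on $V$ implied by reasonable asymptotic elasticity give $V(Z^{\lambda^n}_T)\le C\,(1+V(y_n Z^{\lambda^n}_T))$ for a uniform constant $C$, so uniform integrability transfers and $\{V(Z^{\lambda^n}_T)\}$ is uniformly integrable, which is precisely $V$-relative compactness.

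The main obstacle is this last uniform-integrability upgrade: convergence in probability together with convergence of expectations does not by itself yield uniform integrability, so the argument genuinely depends on peeling off the automatically uniformly integrable negative part $V^-$ (to reduce to the nonnegative Vitali lemma) and on the reasonable-elasticity estimates to shuttle between the arguments $y_n Z^{\lambda^n}_T$ and $Z^{\lambda^n}_T$. A secondary point requiring care is the phrase \emph{all appropriate topologies}: what the argument delivers directly is the $\lzer$-convergence $Z^{\lambda^n}_T\to Z^{\lambda^0}_T$, i.e. convergence in the coarsest appropriate topology, and the conclusion should be read through the convention that convergence of the $\lambda$'s enters only via the continuous map $\lambda\mapsto Z^{\lambda}_T$. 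Completeness is used essentially throughout: in an incomplete market the dual optimizer has the form $Z^{\lambda}\,\EN(L)$ of Proposition \ref{pro:Schweizer-structure-of-M} rather than $Z^{\lambda}$ itself, and the clean recovery of $Z^{\lambda^n}_T$ from $\hat{X}^{x,\lambda^n}_T$ breaks down.
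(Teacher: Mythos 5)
Your proposal is correct and follows essentially the same route as the paper's proof: convergence of the marginal value functions $(u^{\lambda^n})'(x)\to(u^{\lambda^0})'(x)$ via the stability of derivatives of concave functions (the paper cites Theorem 25.7 of \cite{Roc70}), inversion of the first-order condition through the continuous map $U'=(-V')^{-1}$ to obtain $Z^{\lambda^n}_T\to Z^{\lambda^0}_T$ in probability (which is also exactly how the paper reads the conclusion about appropriate topologies, so your caveat on that phrase matches the paper's own interpretation), and a Scheff\'e argument resting on the automatic uniform integrability of the negative parts $V^-$. The one point of divergence is where the Scheff\'e step is anchored: the paper first upgrades the hypothesis to pointwise convergence of the dual value functions, $v^{\lambda^n}(y)\to v^{\lambda^0}(y)$ for \emph{every} fixed $y>0$ (by conjugacy, following the ideas of Proposition \ref{pro:value-functions-converge}), and then runs Scheff\'e directly at $y=1$, which is precisely the family in the definition of $V$-relative compactness. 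You instead work at the moving multipliers $y_n\to y_0$, using only the single conjugacy identity $v^{\lambda^n}(y_n)=u^{\lambda^n}(x)-xy_n$, and must then shuttle uniform integrability from $\{V(y_nZ^{\lambda^n}_T)\}$ back to $\{V(Z^{\lambda^n}_T)\}$ via the growth estimate $V^+(\mu y)\le C\,V^+(y)+D$ (uniform over $\mu=1/y_n$ in a compact subset of $(0,\infty)$) supplied by $AE[U]<1$ --- the same estimate from Proposition 6.3(iii) of \cite{KraSch99} that the paper invokes elsewhere. Both variants are sound: the paper's fixed-$y$ device makes your transfer step unnecessary, while your version asks for less from convex analysis (no convergence of the conjugates at all points).
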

\subsection{On the conditions in the main Theorem \ref{thm:main}}
The purpose of this subsection is provide some intuition about
 the requirement of $V$-relative compactness in connection
with Theorem \ref{thm:main} and Proposition \ref{pro:converse}. We
consider an investor whose preferences are of the  power-type, i.e.,
$$
U(x) = \frac1\gamma x^{\gamma}, \quad V(y) = \frac1{\gamma'}
y^{-\gamma'} \text{ where } \gamma' = \frac\gamma{1-\gamma}.
$$
for some $\gamma\in (-\infty,1)\setminus \set{0}$. 
For $\gamma <0$, the $V$-relative compactness property holds
automatically.
For $\gamma \in (0,1)$, however,  this is not always the case.

 Specializing further, let us assume that all the markets 
$\ld\in\Lambda'\subseteq \Lambda$ under
 consideration are complete, and that the value functions in
 \eqref{equ:UMP} are finite. Conver duality theory (also known as the
 martingale method in financial literature) 
relates the optimal terminal wealth
$\hXl_T$ to the state-price-deflator
$\Zl_T$ via
\begin{align}\label{eq:CRRAoptimalwealth1}
(\hXl_T)^{\gamma-1} = y \Zl_T,
\end{align}
where the Lagrange multiplier $y=y(x,\lambda)$ 
corresponding to
the agent's budget constraint is uniquely determined
by the equation $x = \EE[\Zl_T \hXl_T]$,  with $\hXl_T$ as given by
\eqref{eq:CRRAoptimalwealth1}. Indeed, solving for $\hXl_T$
allows us to compute $y$ explicitely:
\begin{align}\label{eq:CRRAoptimalwealth2}
y=\left( \frac{1}{x} \mathbb{E} \left[(\Zl_T)^\frac{\gamma}{\gamma-1}
\right]\right)^{\gamma-1}.
\end{align}
Equation \eqref{eq:CRRAoptimalwealth1} implies that $\hXl_T$ varies
continuously with $\ld$, essentially if and only if the Lagrange
multiplier $y=y(\ld)$ does.  This is, in turn, intimately related 
 to the concept of $V$-relative compactness. More precisely, let  
$\{\Zln_T\}_{n\in\N_0}$,
$n=0,1,2...$, is the sequence of state-price-deflators
characterizing the financial market with $Z^n \to Z^0$ in
probability, we see that the uniform integrability condition of
$\{(Z^n)^\frac{\gamma}{\gamma-1}\}_n$ allows us to to conclude
that $\EE[(Z^n)^\frac{\gamma}{\gamma-1}] \to
\EE[(Z^0)^\frac{\gamma}{\gamma-1}]$, meaning that the Lagrange
multipliers $y^n$ of \eqref{eq:CRRAoptimalwealth2} converges to
$y^0$. To further elaborate on this, recall that since $Z^n \to
Z^0$ in probability, also $(Z^n)^\frac{\gamma}{\gamma-1}\to
(Z^0)^\frac{\gamma}{\gamma-1}$ in probability, however, this can
happen without the corresponding expectations converge,
$\EE[(Z^n)^\frac{\gamma}{\gamma-1}]\nrightarrow\EE[(Z^0)^\frac{\gamma}{\gamma-1}]$,
in which case the Lagrange multipliers do not converge,
$y^n\nrightarrow y^0$. This section is concluded with a detailed
description of this phenomenon. To be specific, we illustrate that
it is possible to construct a sequence $\{\lambda^n\}_{n\in\N}$
converging to $\lambda^0$ in $\mathbb{L}^2(\mu^M)$, hence $Z^n \to
Z^0$ in probability, however, the expectations do not converge,
$\EE[(Z^n)^\frac{\gamma}{\gamma-1}]\nrightarrow\EE[(Z^0)^\frac{\gamma}{\gamma-1}]$.
Consequently, the investor's utility maximization problem is
ill-posed in the sense that both the value function and the
optimal terminal wealths depend discontinuously of the market
price of risk.

\begin{example}
  Let $\Fi= \mathcal{F}_{t\in[0,1]}$ be the augmented
  filtration generated by a single Brownian motion $B$, and let
  $\set{f^n}_{n\in\N}$ be the sequence of positive, $\FF_1$-measurable
  random variables, given by
\begin{align*}
  f^n(\omega) \triangleq
  \begin{cases}
    n & \text{if }B_1(\omega) \ge \alpha_n\\ 1& \text{if } B_1(\omega)
    \in (\beta_n,\alpha_n)\\ n^{-1} & \text{if }B_1(\omega) \le
    \beta_n
  \end{cases}
\end{align*}
 where the increasing sequence
$\{\alpha_n\}_{n\in\mathbb{N}}$ and the decreasing sequence
$\{\beta_n\}_{n\in\mathbb{N}}$ are given implicitly by
$\Phi(\alpha_n)= 1-\tfrac{1}{2}n^{-5}$ and $\Phi(\beta_n)
=\tfrac{1}{2} n^{-3}$, where $\Phi(\cdot)$ denotes the
distribution function of the the standard normal random variable.
It follows by a direct computation that $f^n\to 1$ almost surely
and $\EE[f^n] \to 1$. By the Martingale Representation Theorem and
since $f^n(\omega) \in[n^{-1},n]$, it also follows that there
exist a sequence $\{\ld^n\}_{n\in\N}$ of predictable processes in
$\ltwo(\PP\times \Leb)$ ($\Leb$ denotes the Lebesgue measure on
$[0,1]$), such that
\begin{equation}
   \nonumber
   \begin{split}
     d\Zln_t= - \Zln_t \ld^n_t\, dB_t,\ t\in [0,1],\text{ and }
     \Zln_1=c_n f^n,\text{ a.s.},
   \end{split}
\end{equation}
where $c_n=1/\EE[f^n]$, so that $\Zln_0=1$, for all $n\in\N$. The
financial market with the risky asset $\Sln$, where
\begin{equation}
   \nonumber
   \begin{split}
d\Sln_t=\Sln_t(\ld^n_t\, dt+dB_t),\ t\in (0,1],\ \Sln_0=1,
   \end{split}
\end{equation}
admits an equivalent martingale measure $\QQ^n$ with
$\RN{\QQ^n}{\PP}=\Zln_1$. By the It\^o-isometry we have
\begin{equation}
   \nonumber
   \begin{split}
     \norm{\ld^n}^2_{\mathbb{L}^2(\PP\times\Leb)} &= \EE\left[\int_0^1 \left(\ld^n_u\right)^2\, du\right]\leq \EE\left[\int_0^1 \left(\ld^n_u\right)^2 n^2 \left(\Zln_u\right)^2 du\right]\\
     &= n^2 \EE\left[\left( \int_0^1 \ld^n_u \Zln_u\, dB_u
       \right)^2\right] = n^2
     \EE\left[\left(\Zln_1-1\right)^2\right]\\
&= n^2\EE\left[\left(c_nf^n-1\right)^2\right]=n^2\Big\{ (n c_n-1)^2(1-\Phi(\alpha_n))+\\
   &+(c_n-1)^2(\Phi(\alpha_n)-\Phi(\beta_n))+(n^{-1}c_n-1 )^2
     \Phi(\beta_n)\Big\}\to 0,
   \end{split}
\end{equation}
by the construction of $\alpha_n$ and $\beta_n$, and thanks to the
fact that $c_n\to 1$. Thus, $\ld^n\to 0$ in $\ltwo(\PP\times\Leb)$
and $\Zln_1\to 1$ in $\ltwo(\PP)$ and in probability, showing that
$\ld^n\to\ld^0\equiv 0$ {\em appropriately} (see Definition
\ref{def:appropriate} and Example \ref{exa:appropriate}).

The optimal terminal wealth $\hXln_1$, in the market with the risky
asset $\Sln$, and for an investor with unit initial wealth and the
power utility $U_{3/4}(x)=\tfrac{4}{3} x^{3/4}$, is given by the
first order condition $U'(\hXln_1) = y_n\Zln_1$ or equivalently
\begin{equation}
   \nonumber
   \begin{split}
\hXln_1= y_n^{-4} (\Zln_1)^{-4},
   \end{split}
\end{equation}
where $y_n>0$ is the Lagrange multiplier determined by the budget-constraint
\begin{equation}
   \nonumber
   \begin{split}
     1=\EE^{\QQ^n}[\hXln_1]=y_n^{-4}\EE\left[\left(c_nf^n\right)^{-3}\right].
   \end{split}
\end{equation}
An explicit computation yields
$$
y_n^4 = c_n^{-3}\Big\{ n^{-3}(1-\Phi(\alpha_n))
+(\Phi(\alpha_n)-\Phi(\beta_n))+n^3 \Phi(\beta_n)\Big\}\to \frac32.
$$
Since
$\Zln_1 \to 1$ in probability,
the sequence $\hXln_1$ converges in probability towards the constant
random variable with value $\frac23$. On the other hand, the optimal
strategy in the limiting market (where the risky security evolves as
$dS_t=S_t\, dB_t$), is not to invest in the risky asset at all,
making $\hX_1=1$ the optimal terminal wealth. It is clear now that
no convergence of the optimal terminal wealths can take place, even
though the convergence $\ld^n\to\ld^0=0$ is appropriate, and even in
$\ltwo(\PP\times\Leb)$.  One could obtain a number of similar
counterexamples (oscillatory behavior, convergence of the Lagrange
multipliers to $+\infty$ or to $0$) by a different choice of
parameters.
\end{example}

\section{Proofs}
The strategy behind the proof of our main Theorem \ref{thm:main} is
to place the utility-maximization problem \eqref{equ:UMP} in an
appropriate functional-analytic framework and to exploit the dual
representation of the value function $\ul$ and the optimal terminal
wealth $\hXxl_T$. The steps of this program are the content of this
section and some of the techniques we apply are inspired by the
proof of Berge's Maximum Theorem.

\subsection{The dual approach to utility maximization}
The results  of \cite{KraSch99}  guarantee the existence and
uniqueness of the optimal terminal wealth in each market $\Sl$,
$\ld\in\LM$, under mild regularity conditions. Moreover, building
on the work of \cite{KarLehShrXu91} and others, the authors of
\cite{KraSch99} have established a strong duality relationship
between the primal utility-maximization problem \eqref{equ:UMP}
and a suitable dual problem posed over the set of martingale
measures $\MMl$, or its enlargement $\YYl$. It is this last
formulation that is most suited for our purposes. More precisely,
with the dual value function $\vl$ being defined by
\begin{equation}
   \label{equ:DVF}
   \begin{split}
     v^{\ld}(y)=\inf_{\QQ\in\MMl} \EE[V(y\dqp)],
   \end{split}
\end{equation}
the main result of \cite{KraSch99}  is the content in Theorem
\ref{thm:Kramkov-Schachermayer} below. We state it for the
reader's convenience, since its content will be used extensively
in the sequel.
\begin{theorem}[Kramkov, Schachermayer, \dots]
\label{thm:Kramkov-Schachermayer}
Let $\ld\in\LM$ be arbitrary, but fixed, and let $u^{\ld}(\cdot)$ and
$v^{\ld}(\cdot)$ be the value functions of the primal and the dual
problem defined above in \eqref{equ:UMP} and \eqref{equ:DVF}. Then, if
$\ul(\cdot)$ does not
identically equal $+\infty$, the following statements hold:
\begin{itemize}
\item[(a)] Both $\ul:(0,\infty)\to\R$ and $\vl:(0,\infty)\to\R$ are finite
  valued, and continuously
  differentiable. Furthermore, $\ul$ is strictly concave and increasing, $\vl$ is
  strictly convex and decreasing, and  the
following conjugacy relation holds between
  them
\begin{equation}
   \nonumber
   \begin{split}
\vl(y)=\sup_{x>0} \big( \ul(x)-xy \big),\ \forall\, y>0.
   \end{split}
\end{equation}
\item[(b)] Alternatively, the dual value function is given as
\begin{equation}
   \label{equ:alternative-vl}
   \begin{split}
\vl(y)=\inf_{Y\in \YYl} \EE[V(yY_T)],
   \end{split}
\end{equation}
where the enlarged domain $\YYl$ is the set of all non-negative \cd
supermartingales $Y$ with $Y_0=1$ such that $X Y$ is a supermartingale
for each $X\in\XXl(1)$. The infimum in \eqref{equ:alternative-vl} is
uniquely attained in $\YYl$ (with the minimizer denoted by $\hYyl$).
\item[(c)] For $x>0$ and $y=(\ul)'(x)$,
 the random variable $\hXxl_T=-V'(y\hYyl_T)$,
belongs to $\XXl(x)$ and is the a.s.-unique
  optimal terminal wealth for an agent with initial wealth $x$ and utility function $U$.
\end{itemize}
\end{theorem}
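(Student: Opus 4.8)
The plan is to follow the programme of \cite{KraSch99}, whose abstract core uses neither the continuity of $\Sl$ nor the special structure of the filtration and so applies here verbatim. The first step is to pass from the process-level formulation to an abstract one at the level of terminal random variables. Introduce the solid hulls
\[
\begin{gathered}
  \CC^{\ld}(x)=\sets{g\in\lzer_+}{0\le g\le X_T\text{ for some }X\in\XXl(x)},\\
  \DD^{\ld}(y)=\sets{h\in\lzer_+}{0\le h\le yY_T\text{ for some }Y\in\YYl},
\end{gathered}
\]
so that $\ul(x)=\sup_{g\in\CC^{\ld}(x)}\EE[U(g)]$ and, once (b) is in hand, $\vl(y)=\inf_{h\in\DD^{\ld}(y)}\EE[V(h)]$. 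The cornerstone of the whole argument is the \emph{bipolar relation}: $\CC^{\ld}(1)$ and $\DD^{\ld}(1)$ are polar to one another in the $\lone$--$\linf$ pairing $(g,h)\mapsto\EE[gh]$, i.e. $g\in\CC^{\ld}(1)$ iff $\EE[gh]\le1$ for all $h\in\DD^{\ld}(1)$, and symmetrically. This is where the financial structure enters, and I would derive it from the superhedging duality (optional decomposition), which characterizes $\XXl(1)$ as exactly those nonnegative $X$ for which $XY$ is a supermartingale for every $Y\in\YYl$; the definition of $\YYl$ in (b) is tailored precisely so that this equivalence holds. Both $\CC^{\ld}(1)$ and $\DD^{\ld}(1)$ are then convex, solid, and closed in $\lzer$ under convergence in probability; moreover $\DD^{\ld}(1)$ lies in the unit ball of $\lone$, since every $Y\in\YYl$ is a nonnegative supermartingale with $Y_0=1$, and it contains $\Zl_T$.

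The second step is finiteness and dual attainment, and here the reasonable-elasticity hypothesis $AE[U]<1$ is indispensable. I would first record its analytic consequences on the conjugate $V$: growth bounds of the form $V(\mu y)\le C(\mu)\big(1+V(y)\big)$ for $\mu\in(0,1)$ and $y\,\abs{V'(y)}\le C\big(1+V(y)\big)$ for large $y$ (the growth lemma for $V$ in \cite{KraSch99}). Combined with the $\lone$-boundedness of $\DD^{\ld}(y)$, these bounds show $\vl(y)<\infty$ and, crucially, that along any minimizing sequence $\{h_k\}$ the negative parts $V^-(h_k)$ are uniformly integrable (as already noted in the Remark following Definition~\ref{def:V-relative-weak-compactness}), so that no mass escapes. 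Passing to forward convex combinations (Koml\'os' lemma) produces a limit $\hat h=y\hYyl_T$; closedness keeps it in $\DD^{\ld}(y)$, Fatou's lemma gives lower semicontinuity of $h\mapsto\EE[V(h)]$, so $\hat h$ attains the infimum, and strict convexity of $V$ forces uniqueness. This yields (b).

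The final step extracts (a) and (c) from the polarity. A minimax/Lagrangian argument over the polar pair $\CC^{\ld}(1)$, $\DD^{\ld}(1)$ shows that $\ul$ and $\vl$ are mutually conjugate, $\vl(y)=\sup_{x>0}\big(\ul(x)-xy\big)$ and $\ul(x)=\inf_{y>0}\big(\vl(y)+xy\big)$; together with the strict concavity/convexity inherited from $U$, $V$ and the finiteness from the previous step, this gives continuous differentiability and proves (a). For (c), the pointwise first-order condition in the attained dual problem reads $\hat g=-V'(\hat h)=I(\hat h)$ with $I=(U')^{-1}$, and the envelope relation at $y=(\ul)'(x)$ makes the budget constraint bind, $\EE[\hat h\,\hat g]=xy$, so $\hat g=-V'(y\hYyl_T)$ has ``cost'' exactly $x$. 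Invoking the bipolar relation once more identifies this $\hat g$ with the terminal value $\hXxl_T$ of a genuine admissible wealth process $\hXxl\in\XXl(x)$, and strict concavity of $U$ gives a.s.-uniqueness. I expect the main obstacle to be the bipolar relation of the first step --- that is, the superhedging duality underlying the definition of $\YYl$ --- together with the uniform-integrability control purchased by $AE[U]<1$; once these are secured, the rest is standard convex duality.
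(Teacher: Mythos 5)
The paper offers no proof of this theorem at all: it is quoted ``for the reader's convenience'' and attributed, via citation, to \cite{KraSch99} (Theorems 2.1 and 2.2 there), so the only meaningful comparison is with the Kramkov--Schachermayer argument itself. Measured against that, your sketch is a faithful compression of the actual proof: the reduction to the abstract sets $\CC^{\ld}(x)$, $\DD^{\ld}(y)$, the bipolar relation resting on the optional decomposition theorem, dual attainment via Koml\'os-type forward convex combinations plus Fatou, conjugacy of $\ul$ and $\vl$ by a minimax argument, and recovery of the primal optimizer through $\hXxl_T=-V'(y\hYyl_T)$ and the bipolarity --- this is exactly the KS programme, and your remark that the abstract core needs neither continuity of $\Sl$ nor any filtration structure is also correct. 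Two small inaccuracies deserve flagging. First, the polarity is not in an $\lone$--$\linf$ pairing: elements of $\CC^{\ld}(1)$ need not be bounded and elements of $\DD^{\ld}(1)$ are only $\lone$-bounded, so the correct setting is the bilinear form $(g,h)\mapsto\EE[gh]$ on $\lzer_+\times\lzer_+$ (the Brannath--Schachermayer bipolar theorem is the relevant abstract tool). Second, the uniform integrability of the negative parts $V^-(h_k)$ is \emph{not} purchased by $AE[U]<1$ --- it follows from $\lone$-boundedness of $\DD^{\ld}(y)$ alone, via de la Vall\'ee Poussin, as the paper's own remark after Definition \ref{def:V-relative-weak-compactness} notes; what $AE[U]<1$ actually buys is the growth bounds on $V$ and $V'$ that give finiteness of $\vl$ for all $y>0$, control of the positive parts, and the differentiation under the expectation ($-v'(y)=\EE[\hYyl_T\,(-V')(y\hYyl_T)]$) that makes the budget constraint bind with $\EE[\hat h\hat g]=xy$ in your final step. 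With those two points repaired, your outline matches the cited proof.
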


\subsection{Structure of the dual domain}
Some of the central arguments in the proof of our main result
\ref{thm:main} depend on a precise characterization of the set
$\YYl$ introduced in (b) above. Thanks to the continuity of the
paths of our price process $\Sl$, this can be achieved in a quite
explicit manner, as described in the following proposition.
\begin{proposition}
\label{pro:structure-of-supermartingales} For $\ld\in\LM$, let $Y$
be in $\YYl$, i.e., $Y$ is a non-negative \cd supermartingale such
that $Y_0=1$, and $YX$ is a supermartingale for each $X\in \XXl(1)$.
When $Y_T>0$ a.s., we have the following multiplicative
decomposition
$$
Y=\Zl \EN(L) D,
$$
where $\Zl=\EN(-\ld\cdot M)$, L is a \cd local martingale,
strongly orthogonal to $M$, meaning $\langle M,L\rangle \equiv 0$,
and $D$ is a predictable, non-increasing, \cd process with
$D_0=1$, $D_T>0$, a.s.
\end{proposition}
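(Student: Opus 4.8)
The plan is to reduce the decomposition of $Y$ to the multiplicative decomposition of the single strictly positive supermartingale $N:=Y/\Zl$, and then to read off the orthogonality of its martingale part from the defining supermartingale property of $\YYl$. First I would observe that the ``numeraire'' (log-optimal) portfolio $1/\Zl$ is itself admissible: applying It\^o's formula to $\EN(-\ld\cdot M)^{-1}$ gives $d(1/\Zl)=(\ld/\Zl)\,d\Sl$, so that $1/\Zl=1+(\ld/\Zl)\cdot \Sl\in\XXl(1)$. Since $Y\in\YYl$, the product $Y\cdot(1/\Zl)=N$ is therefore a supermartingale; moreover $N_0=1$, and because a nonnegative supermartingale that reaches $0$ is absorbed there, the hypothesis $Y_T>0$ forces $N_t>0$ for all $t\in[0,T]$ a.s.

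Next I would invoke the classical multiplicative Doob--Meyer decomposition of the strictly positive \cd supermartingale $N$, writing $N=\EN(L)\,D$ with $\EN(L)$ a strictly positive local martingale ($L$ a \cd local martingale, $L_0=0$) and $D$ a predictable, non-increasing, \cd process with $D_0=1$; positivity of $N_T$ yields $D_T>0$. Substituting $N=Y/\Zl$ gives exactly $Y=\Zl\,\EN(L)\,D$, so everything except the strong orthogonality $\ab{M,L}\equiv0$ is already in place.

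The heart of the matter is this orthogonality, and here I would exploit the trading freedom encoded in $\YYl$. The same It\^o computation as above shows that $X\mapsto P:=\Zl X$ is a bijection between $\XXl(1)$ and the class of nonnegative local martingales of the form $dP=\phi\,dM$ with $P_0=1$; in particular every $P=\EN(\theta\cdot M)$ with $\theta$ bounded and predictable arises in this way, and is automatically strictly positive, so that no admissibility constraint is active. For each such $P$ the defining property of $\YYl$ forces $NP=YX$ to be a supermartingale. Writing $D=\EN(-A)$ with $A$ predictable non-decreasing and letting $\psi$ be the Kunita--Watanabe density defined by $d\ab{L^c,M}=\psi\,d\ab{M}$ (note $\ab{M,L}=\ab{M,L^c}$, as $M$ is continuous), Yor's formula shows that the predictable finite-variation part of $NP$ has differential $(NP)_-\big(\theta\psi\,d\ab{M}-dA\big)$. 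Since $NP\ge 0$ is a nonnegative special semimartingale, it is a supermartingale precisely when this part is non-increasing, i.e. when $\theta\psi\,d\ab{M}\le dA$ as a relation between measures. This must hold for \emph{every} bounded predictable $\theta$; choosing $\theta=c\,\sgn(\psi)\ind{B}$ on an arbitrary predictable set $B$ and letting $c\to\infty$ is incompatible with the fixed finite measure $dA$ unless $\psi\ind{B}=0$. As $B$ is arbitrary this gives $\psi=0$, that is $\ab{M,L}\equiv0$.

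I expect the delicate point to be this last step: justifying rigorously that, for each fixed bounded $\theta$, the supermartingale requirement on $NP$ genuinely forces the pointwise drift inequality $\theta\psi\le a$ \ $d\ab{M}\otimes\PP$-a.e., and that the fixed measure $dA$ cannot dominate the unbounded family $c\abs{\psi}\ind{B}$ (here one uses $A_T=-\log D_T<\infty$, so that $dA$ is a genuine finite measure). Some additional care is needed in the multiplicative decomposition and in the bracket bookkeeping should $L$ or $D$ carry jumps, but since both $M$ and $\Zl$ are continuous these jump contributions never interact with the $M$-direction and hence do not affect the orthogonality conclusion.
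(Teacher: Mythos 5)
Your proof is correct, and it takes a genuinely different route through the crucial step. Both you and the paper start from the multiplicative Doob--Meyer decomposition and exploit the supermartingale property of $YX$ over a family of test strategies, but the paper works on $Y$ directly, writing $Y=\EN(-\alpha\cdot M+L)D$, and identifies $\alpha=\ld$ by contradiction: it tests against additive strategies $H^n=n\ind{A}$ (stopped to preserve positivity of wealth), and rules out the resulting inequality via a time change that turns $\ind{A}\cdot\Sl$ into a Brownian motion with bounded drift (L\'evy's criterion), invoking the fact that such a process enters the negative half-line immediately; it also needs the Delbaen--Schachermayer splitting of $dD$ into a $d\ab{M}$-absolutely continuous part and a singular part. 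You instead deflate first, using the (correct) observation that $1/\Zl=1+(\ld/\Zl)\cdot\Sl\in\XXl(1)$, so that $N=Y/\Zl$ is a strictly positive supermartingale; after decomposing $N=\EN(L)D$ you test against the multiplicative strategies $P=\EN(c\,\sgn(\psi)\cdot M)$, which are automatically strictly positive, so the admissibility constraint never bites. The supermartingale requirement then becomes the pathwise measure domination $c\abs{\psi}\,d\ab{M}\le dA$ for every $c>0$, which forces $\psi=0$ because $A_T<\infty$. Your route is more elementary --- no time change, no L\'evy criterion, no Brownian path properties, no singular/absolutely-continuous splitting --- essentially because exponential strategies trivialize the positivity constraint; the paper's route, at the price of that machinery, extracts the finer pointwise information $H_t(\ld_t-\alpha_t)\le\beta_t X_t$ $\mu^M$-a.e. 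Two small repairs to your write-up, neither affecting validity: $A_T=-\log D_T$ holds only if $D$ is continuous (in general $D_T=e^{-A^c_T}\prod_{s\le T}(1-\Delta A_s)$, but $D_T>0$ still yields $A_T<\infty$, which is all you use); and the ``delicate point'' you flag is settled exactly by uniqueness of the canonical decomposition of the special semimartingale $NP$ together with Doob--Meyer, which delivers precisely the measure inequality $\theta\psi\,d\ab{M}\le dA$ you state --- no pointwise version is needed, and Yoeurp's lemma (that $[L,A]$ is a local martingale for $A$ predictable of finite variation) disposes of the jump bookkeeping as you anticipated.
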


\begin{proof}
For the sake of notational clarity, we omit the superscript $\ld$
from all expressions in the present proof. Since $Y$ is strictly
positive, $Y$ has a multiplicative Doob-Meyer decomposition:
$$
Y_t = \mathcal{E}(-\alpha\cdot M + L)_t D_t
$$
for some $\alpha\in L(M)$, a local martingale $L$ satisfying
$\langle L,M\rangle\equiv0$,
 and a predictable,
c\'{a}dl\'{a}g, non-increasing process $D$ (see Theorem 8.21,
p.~138 in \cite{JacShi03}). Thanks to the strong orthogonality of
$L$ and $M$, the relationship $\EN(-\alpha\cdot
M+L)=\EN(-\alpha\cdot M)\,\EN(L)$ holds. Therefore, it remains to
show that $\alpha = \lambda$ almost everywhere with respect to the
measure $\mu^M$ defined in \eqref{equ:mu-A}.

By the strict positivity of the process $D$, we can write
$dD_t=D_{t-}dF_t$ for a non-increasing predictable process $F$.
 Using Theorem 2.1 in \cite{DelSch95}, $F$ can be split into an
 integral with respect to $d\ab{M}$ (the absolute continuous part) and a
singular part $F'$. More precisely,  there exists a $\mu^M$-null set $A$
with $F' = \int_0^{\cdot} 1_A(u)dF'_u$, and
a non-negative predictable process $\beta$ such that
$$
F_t = -\int_0^t\beta_u\, d\ab{M}_u + F'_t,\ t\in [0,T]
$$
With this notation we have $dD_t = -D_{t-}\beta_td\langle
M\rangle_t+D_{t-}dF'_t$ and by It\^{o}'s Lemma and the
predictability of $F$ we get
$$
dY_t = Y_{t-}(-\alpha_t\, dM_t + dL_t-\beta_t\,d\ab{M}_t +
dF'_t),\ t\in (0,T],\ Y_0=1.
$$
Therefore for any admissible portfolio wealth process
$X\in\mathcal{X}^\lambda(1)$ generated by a portfolio $H$ we have
\begin{align*}
d(Y_tX_t)
= Y_{t-}H_t(\lambda_td\langle M\rangle_t+dM_t) &+
X_tY_{t-}(-\alpha_tdM_t + dL_t-\beta_td\langle M\rangle_t +dF'_t)-\\
&-Y_{t-}\alpha_tH_td\langle M\rangle_t
\end{align*}
and given the supermartingale property the drift in the above has
to be non-positive, meaning that for any $H$ we have the
inequality
$$\Big( H_t( \lambda_t-\alpha_t)-\beta_tX_t\Big)d\langle
M\rangle_t+X_tdF'_t\le 0,$$ in the sense that the measure the
left-hand-side generates on the predictable sets is non-positive.
Moreover, by the singularity between $\mu^M$ and $dF'$, the
following must hold $\mu^M$-a.e.
$$H_t( \lambda_t-\alpha_t)\le \beta_tX_t$$
for all admissible $H$.  Suppose now, contrary to the claim we are
trying to prove, that
$\mu^M(\ld\not=\alpha)>0$. Without loss of generality we
assume that this implies that
exists a predictable set $A_1\subseteq [0,T]\times \Omega$ with the
property that
\begin{enumerate}
\item $\ld-\alpha\geq \eps$ on $A_1$  for some $\eps>0$, and
\item $\mu^M(A_1)>0$.
\end{enumerate}

Since  $\beta$ and $\ld$ are finite-valued predictable
process and $\beta$ is non-negative,
 we can find a constant $\Sigma>0$
and a predictable set $A_2$ such that $\beta,\abs{\ld}\in [0, \Sigma]$
on $A_2$ and $\mu^M (A)>0$, where $A=A_1\cap A_2$.

For $n\in\N$, let $\tilde{H}^n$ be the predictable process given by
$\tilde{H}=n \ind{A}$, and let $\tau_n$ be the first exit
 time
of the process $1+\tilde{H}^n\cdot S$ from the semi-axis
$(0,\infty)$. Define the adjusted predictable process $H^n$ by
 $H^n=\tilde{H}^n\ind{[0,\tau_n]}$, so  that
 $\mu^M(\set{H^n>0})>0$.
For each $n$, $H^n$ is predictable and $X^n \triangleq 1+(H^n\cdot
S)$ is in $\mathcal{X}(1)$ and so by the above we have
\begin{equation}
   \label{equ:pre-central-thing}
   \begin{split}
\eps/\Sigma &\leq  X^n \beta_t/\Sigma \leq X^n=(1+n\ind{A}\cdot S)^{\tau_n},
\ \mu^M-\text{a.e. on $A$.}
   \end{split}
\end{equation}
Observe that one of the conclusions of \eqref{equ:pre-central-thing}
is that the stopping time $\tau_n$ will not be realized on $A$
(because
the process $1+n\ind{A}\cdot S$ is continuous). Also, as the process
$\ind{A}\cdot S$ is constant off $A$, we have
 the following strengthening of \eqref{equ:pre-central-thing}
\begin{equation}
   \label{equ:central-thing}
   \begin{split}
\eps/\Sigma \leq 1+n\ind{A}\cdot S, \mu^M-\text{a.e.}
   \end{split}
\end{equation}
Define the non-decreasing continuous process
$C$ by $C_t=\int_0^t \ind{A}(u)\, d\ab{M}_u$, and note that
$\mu^M(A)>0$ implies that $\PP[C_T>0]>0$.
 Therefore,
 the right inverse $G$ of $C$, given by
$G_s=\inf\sets{t\geq 0}{C_t> s}$, where $\inf\emptyset=+\infty$,
is a right-continuous, non-decreasing $[0,\infty]$-valued stochastic
process, such that $G_s<\infty$ on the (non-trivial) stochastic
interval $[0,C_T)$. Define the  process $V$ by
\begin{equation}
   \nonumber
  \begin{split}
V_s=\begin{cases}
S_{G_s}-S_0,& \text{when $G_s<\infty$, and} \\
S_{T}-S_0+\tilde{B}_{s-C_T}& \text{otherwise,}
\end{cases}
   \end{split}
\end{equation}
where $\tilde{B}$ is a Brownian motion, defined on an extension of the
probability space $(\Omega,\FF,\PP)$ and independent of $\FF_T$.
An application of L\' evy's criterion shows that $V$ is
a Brownian motion with drift $\ld_{G_s}\inds{G_s<\infty}$.
Letting $n\to\infty$ in \eqref{equ:central-thing} yields that
$V_s\geq 0$ for $s\in [0,C_T)$. On the other hand, as $\ld\leq \Sigma$
on $A$, $V$ is bounded from above by a Brownian motion with a constant
drift $C$. This is, however, a contradiction, as almost every
trajectory of a Brownian motion with a constant drift enters the negative
semi-axis $(-\infty,0)$, in every neighborhood of $0$.
\end{proof}

\begin{corollary}
  \label{cor:inf-attained-at-a-local-martingale}
For  $\ld\in\Lambda_M$, $\vl$ is the value function of the dual
optimization problem defined by \eqref{equ:DVF}. For each $y>0$,
such that $\vl(y)<\infty$, there exists a local martingale $\Lyl$,
strongly orthogonal to $M$, such that
\begin{equation}
   \nonumber
   \begin{split}
\vl(y)=\EE[V(y\Zl_T \EN(\Lyl))].
   \end{split}
\end{equation}
\end{corollary}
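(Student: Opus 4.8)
The plan is to identify the dual optimizer $\hYyl$ furnished by Theorem~\ref{thm:Kramkov-Schachermayer}(b) and to show that, for this particular element of $\YYl$, the non-increasing factor $D$ in the multiplicative decomposition of Proposition~\ref{pro:structure-of-supermartingales} is trivial. First I would check that $\hYyl_T>0$ a.s., so that the decomposition actually applies. Since $\vl(y)<\infty$, the conjugacy relation forces $\ul$ to be finite (not identically $+\infty$), so Theorem~\ref{thm:Kramkov-Schachermayer} is in force; choosing $x=-(\vl)'(y)>0$ we get $y=(\ul)'(x)$, and part (c) gives $\hXxl_T=-V'(y\hYyl_T)$. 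Because $U$ satisfies the Inada conditions, $-V'=(U')^{-1}$ blows up to $+\infty$ at $0$, so $\hYyl_T=0$ on a set of positive probability would make the finite-valued random variable $\hXxl_T$ infinite there, a contradiction. Hence $\hYyl_T>0$ a.s.\ and Proposition~\ref{pro:structure-of-supermartingales} yields $\hYyl=\Zl\EN(L)D$ with $L$ a local martingale strongly orthogonal to $M$ and $D$ predictable, non-increasing, $D_0=1$, $D_T>0$.

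Next I would introduce the candidate $\tilde Y=\Zl\EN(L)$, i.e.\ the optimizer with its $D$-factor removed, and verify $\tilde Y\in\YYl$. Using strong orthogonality $\ab{M,L}\equiv0$ (which, $M$ being continuous, also gives $[M,L]\equiv 0$) together with Yor's formula, $\tilde Y=\EN(-\ld\cdot M+L)$ is a non-negative local martingale with $\tilde Y_0=1$, hence a supermartingale; non-negativity follows since $D>0$ (being non-increasing from $1$ with $D_T>0$) gives $\tilde Y=\hYyl/D\ge0$. For the defining supermartingale property, fix $X=1+H\cdot\Sl\in\XXl(1)$; a short It\^o computation shows $d(X\Zl)=\Zl(H-X\ld)\,dM$, so $X\Zl$ is a local martingale whose covariation with $L$ vanishes, $d[X\Zl,L]=\Zl(H-X\ld)\,d\ab{M,L}=0$. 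Consequently $X\tilde Y=(X\Zl)\EN(L)$ is a product of strongly orthogonal local martingales, hence itself a non-negative local martingale and therefore a supermartingale. This confirms $\tilde Y\in\YYl$.

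Finally I would compare the two feasible elements. Since $D$ is non-increasing with $D_0=1$ we have $D_T\le1$, so $y\tilde Y_T\ge y\hYyl_T$ pointwise; as $V$ is decreasing this gives $V(y\tilde Y_T)\le V(y\hYyl_T)$ and hence $\EE[V(y\tilde Y_T)]\le\EE[V(y\hYyl_T)]=\vl(y)$ (the positive part is integrable by this inequality, so the expectation is well defined in $[-\infty,\infty)$). On the other hand $\tilde Y\in\YYl$ forces $\EE[V(y\tilde Y_T)]\ge\vl(y)$ through \eqref{equ:alternative-vl}, so $\tilde Y$ is also a minimizer. The uniqueness of the dual optimizer in Theorem~\ref{thm:Kramkov-Schachermayer}(b) then yields $\tilde Y_T=\hYyl_T$ a.s., i.e.\ $D_T=1$, and setting $\Lyl:=L$ gives $\vl(y)=\EE[V(y\Zl_T\EN(\Lyl)_T)]$, as claimed. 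The step I expect to be the main obstacle is the verification that $\tilde Y$ is again an element of $\YYl$ — specifically the supermartingale property of $X\tilde Y$ — which is exactly where the strong orthogonality of $L$ to $M$ and the local-martingale structure of $X\Zl$ do the essential work; securing $\hYyl_T>0$ a.s.\ so that the structural Proposition applies at all is the other point needing care.
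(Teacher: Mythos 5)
Your proof is correct and follows essentially the same route as the paper: apply Proposition \ref{pro:structure-of-supermartingales} to the dual optimizer $\hYyl$, note that stripping off the non-increasing factor $D$ yields the element $\Zl\EN(L)$ of $\YYl$ which dominates $\hYyl$ pointwise, and conclude $D_T=1$ a.s. The only differences are cosmetic: you spell out two steps the paper leaves implicit (the positivity $\hYyl_T>0$ needed to invoke the decomposition, and the verification that $\Zl\EN(L)\in\YYl$), and you close the argument via uniqueness of the dual minimizer where the paper invokes the strict decrease of $V$.
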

\begin{proof}
Theorem \ref{thm:Kramkov-Schachermayer}, (c)  implies that the
infimum in the definition of  the dual value function $\vl$ is
attained at a terminal value $Y_T$ of a supermartingale $Y$ with the
property that $YX$ is a supermartingale for each $X\in\XXl(1)$.
Proposition \ref{pro:structure-of-supermartingales} states that each
such supermartingale can be written
$$
Y = Z^\lambda \EN(L)D,\text{ with $\langle L,M\rangle\equiv 0$.}
$$
Thanks to the strictly decrease of the function $V$,
 we must have $D\equiv 1$. Indeed, $Z^{\ld} \EN(L)$ dominates $Z^{\ld} \EN(L) D$ pointwise,
and belongs to $\YY^{\ld}$.
\end{proof}
Let $\BB$ denote the set of all local martingales $L$, strongly
orthogonal to $M$,  such that the terminal value $\EN(L)_T$ of the
stochastic exponential $\EN(L)$ is bounded from below by a positive
constant.
\begin{corollary}
\label{cor:infimum-over-bounded} Let $\lambda\in\Lambda_M$ and
suppose that $\EE[V^+(\Zl_T)]<\infty$. Then for each $y>0$ we have
the representation
\begin{equation}
   \label{equ:infimum-over-bounded}
   \begin{split}
\vl(y)=\inf_{L\in\BB} \EE[ V(y\Zl_T \EN(L)_T)].
   \end{split}
\end{equation}
\end{corollary}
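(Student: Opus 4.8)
The plan is to prove the two inequalities in \eqref{equ:infimum-over-bounded} separately; the only real content is that the extra restriction built into $\BB$ (namely that $\EN(L)_T$ be bounded below by a positive constant) does not raise the infimum. For the bound ``$\ge$'' I would first check that every candidate in $\BB$ is a legitimate dual element. Fix $L\in\BB$. Since $\EN(L)$ is a non-negative local martingale it is a supermartingale, and since $L$ is strongly orthogonal to $M$ the process $\Zl\EN(L)X$ is a local martingale --- hence, being non-negative, a supermartingale --- for every $X\in\XXl(1)$, exactly as in the computation already used in the proof of Corollary \ref{cor:inf-attained-at-a-local-martingale}. Thus $\Zl\EN(L)\in\YYl$, and by the definition \eqref{equ:alternative-vl} of $\vl$ we get $\EE[V(y\Zl_T\EN(L)_T)]\ge\vl(y)$ for each $L\in\BB$, whence $\inf_{L\in\BB}\EE[V(y\Zl_T\EN(L)_T)]\ge\vl(y)$.

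For ``$\le$'' I would first record the finiteness facts. The standing hypothesis $\EE[V^+(\Zl_T)]<\infty$, together with reasonable asymptotic elasticity (which controls the growth of $V$ under rescaling of its argument), gives $\EE[V^+(y\Zl_T)]<\infty$, so $\vl(y)\le\EE[V(y\Zl_T)]<\infty$. By Theorem \ref{thm:Kramkov-Schachermayer}(a) the function $\vl$ is then finite-valued; and since $\Zl=\Zl\EN(0)\in\YYl$ forces $\EE[V(y\Zl_T)]\ge\vl(y)>-\infty$, the quantity $\EE[V(y\Zl_T)]$ is a finite real number. In particular Corollary \ref{cor:inf-attained-at-a-local-martingale} applies and furnishes a local martingale $\Lyl$, strongly orthogonal to $M$, with $\vl(y)=\EE[V(y\Zl_T\EN(\Lyl)_T)]$.

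The heart of the matter is then to approximate $\Lyl$ by elements of $\BB$. The naive idea of stopping $\Lyl$ so as to keep $\EN(\Lyl)$ bounded below fails, because a downward jump at the stopping time can drive $\EN(\Lyl)$ arbitrarily close to $0$; this is the main obstacle, and I would sidestep it with a convex-combination trick that never leaves the realm of local martingales. For $\alpha\in(0,1)$ set $N^{\alpha}=\alpha\,\EN(\Lyl)+(1-\alpha)$. This is a strictly positive local martingale with $N^{\alpha}_0=1$ and $N^{\alpha}\ge 1-\alpha>0$, so its stochastic logarithm $L^{\alpha}=\int(N^{\alpha}_{-})^{-1}\,dN^{\alpha}$ is a well-defined local martingale satisfying $\EN(L^{\alpha})=N^{\alpha}$. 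Since $dN^{\alpha}=\alpha\,\EN(\Lyl)_{-}\,d\Lyl$ and $\Lyl$ is strongly orthogonal to $M$, the integral $L^{\alpha}$ is strongly orthogonal to $M$ as well, and $\EN(L^{\alpha})_T=N^{\alpha}_T\ge 1-\alpha$. Hence $L^{\alpha}\in\BB$.

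Finally I would let $\alpha\uparrow 1$ and exploit the convexity of $V$. Because $y\Zl_T\EN(L^{\alpha})_T=\alpha\,\big(y\Zl_T\EN(\Lyl)_T\big)+(1-\alpha)\,\big(y\Zl_T\big)$, convexity yields
\[
\EE[V(y\Zl_T\EN(L^{\alpha})_T)]\le \alpha\,\EE[V(y\Zl_T\EN(\Lyl)_T)]+(1-\alpha)\,\EE[V(y\Zl_T)]=\alpha\,\vl(y)+(1-\alpha)\,\EE[V(y\Zl_T)].
\]
As $\EE[V(y\Zl_T)]$ is a finite real, letting $\alpha\uparrow 1$ gives $\inf_{L\in\BB}\EE[V(y\Zl_T\EN(L)_T)]\le\vl(y)$. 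Combining this with the reverse inequality from the first step establishes \eqref{equ:infimum-over-bounded}.
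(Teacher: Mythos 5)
Your proof is correct and follows essentially the same route as the paper: the key step in both is to replace the dual optimizer $\EN(\Lyl)$ by the convex combination $\alpha\,\EN(\Lyl)+(1-\alpha)$, which is bounded away from zero, and then let $\alpha\uparrow 1$ using the convexity of $V$ together with the finiteness of $\EE[V(y\Zl_T)]$ (guaranteed by the hypothesis and asymptotic elasticity). The only real difference is how membership of this combination in $\BB$ is verified: the paper appeals once more to Proposition \ref{pro:structure-of-supermartingales}, writing $\Zl\bigl(\tfrac{n-1}{n}\EN(\Lyl)+\tfrac{1}{n}\bigr)=\Zl\EN(L^n)D^n$ and noting that $D^n\le 1$ forces $\EN(L^n)\ge\tfrac{1}{n}$, whereas you exhibit the stochastic logarithm $L^{\alpha}$ explicitly and check its strong orthogonality to $M$ by hand -- a minor, slightly more self-contained variant of the same argument.
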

\begin{proof}
The fact that the infimum on the right-hand side of
\eqref{equ:infimum-over-bounded} is bounded from below by the
value function $\vl(y)$ follows directly from Theorem
\ref{thm:Kramkov-Schachermayer} and Proposition
\ref{pro:structure-of-supermartingales}. For the other inequality,
let $\Lyl$ be the local martingale from the statement of Corollary
\ref{cor:inf-attained-at-a-local-martingale}. If $\EN(\Lyl)_T$
happened to be bounded from below by a strictly positive constant,
there would be nothing else left to prove. However, $\EN(\Lyl)_T$
is, in general, not bounded away from zero, so we employ a
limiting argument via a suitably defined sequence $L^n\in \BB$.
For $n\in\N$, let $Y^n$ be the supermartingale in $\YYl$ given by
\[Y^n=\Zl\big(\tfrac{n-1}{n} \EN(\Lyl)+\tfrac{1}{n}\big).\]
The process $Y^n$ is a positive local martingale with the property
that $Y^n X$ is a supermartingale for each $X\in\XXl(1)$ and
therefore the proposition allows us to write $Y^n =
Z^\lambda\EN(L^n)D^n$ and since $D^n\le 1$ we have
$\EN(L^n)\ge\frac1n$ and so $\EN(L^n)\in\mathcal{B}$. Furthermore,
since $V$ is decreasing and convex we have
\begin{align*}
\EE[V(yZ_T^\lambda\EN(L^n)_T)] &\le \EE[V(yY^n_T)]\\&\le
\frac{n-1}{n}\EE\left[V\Big(yZ^\lambda_T\EN(\Lyl)_T\Big)\right]
+\frac{1}{n}\EE\left[V\Big(yZ_T^\lambda\Big)\right]\\
&\le\frac{n-1}{n}v^\lambda(y)
+\frac{1}{n}\EE\left[V^+\Big(yZ^\lambda_T\Big)\right]\\
&\le\frac{n-1}{n}v^\lambda(y)
+\frac{1}{n}\left(C\,\EE\left[V^+\Big(Z^\lambda_T\Big)\right]+D\right)
\end{align*}
for two constants $C$ and $D$ granted by the asymptotic elasticity
of $U$ (see Proposition 6.3(iii) of \cite{KraSch99}). Taking  the
$\liminf$ with respect to $n$ on both sides yields the desired inequality.
\endproof

\end{proof}

\subsection{Joint continuity of the value functions}
The following lemmas establish a joint continuity property for the
primal and dual value functions and their derivatives. Before we
proceed, let us agree that in the sequel $\Lambda'\subseteq
\Lambda_M$ is $V$-relatively compact, and that $\tau$ is an
appropriate topology. By the inequality
\begin{equation}
   \nonumber
   \begin{split}
U(X_T)\leq V(\Zl_T)+X_T\Zl_T,
   \end{split}
\end{equation}
and the supermartingale property of the process $X\Zl$ when
$X\in\XX^\lambda(x)$,  it follows that $\ul(x)\leq
\EE[V(\Zl_T)]+x<\infty$, for all $x>0$ and $\ld\in\Lambda'$.
Therefore, the assumptions of the Theorem
\ref{thm:Kramkov-Schachermayer} are satisfied, and its conclusions
hold.
\begin{lemma}
\label{lem:continuous-in-probability} Let $Y$ be a random variable,
bounded from below by a strictly positive constant, such that
$\sup_{\ld\in\Lambda'}\EE[\Zl_T Y]<\infty$. Then the mapping
$(y,\ld)\mapsto V(y \Zl_T Y)$ is continuous from $(0,\infty)\times
\Lambda'$ (with the product topology) into $\lone$. In particular, the
mapping $(y,\ld)\mapsto \EE[V(y\Zl_T Y)]$ is continuous.
\end{lemma}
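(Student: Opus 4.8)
The plan is to reduce the claim to sequences and then combine convergence in probability with a uniform-integrability estimate via the Vitali convergence theorem. Since an appropriate topology $\tau$ is metrizable (Definition \ref{def:appropriate}) and $(0,\infty)$ carries its usual metric, the product topology on $(0,\infty)\times\Lambda'$ is metrizable, so it is enough to show that $(y_n,\ld^n)\to(y_0,\ld^0)$ in $(0,\infty)\times\Lambda'$ implies $V(y_n\Zln_T Y)\to V(y_0 Z^{\ld^0}_T Y)$ in $\lone$. Set $a=\inf_n y_n$ and $b=\sup_n y_n$; since $y_n\to y_0\in(0,\infty)$ we have $0<a\le b<\infty$, and by hypothesis $Y\ge c$ for some constant $c>0$.

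First I would record the convergence in probability. By the defining property of an appropriate topology, $\ld^n\to\ld^0$ yields $\Zln_T\to Z^{\ld^0}_T$ in $\lzer_+$, hence $y_n\Zln_T Y\to y_0 Z^{\ld^0}_T Y$ in probability. Because $Z^{\ld^0}_T>0$ a.s., the limit $y_0 Z^{\ld^0}_T Y$ is strictly positive a.s., so the continuity of $V$ on $(0,\infty)$ together with the continuous-mapping theorem gives $V(y_n\Zln_T Y)\to V(y_0 Z^{\ld^0}_T Y)$ in probability.

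The main work, and the real obstacle, is the uniform integrability of $\set{V(y_n\Zln_T Y)}_n$, which I would split into positive and negative parts. For the positive part, monotonicity of $V^+$ and the bound $y_n Y\ge ac$ give $V^+(y_n\Zln_T Y)\le V^+(ac\,\Zln_T)$; the reasonable asymptotic elasticity of $U$ furnishes constants $C,D$, depending only on $ac$, with $V^+(ac\,z)\le C\,V^+(z)+D$ for all $z>0$ (this is the rescaling invoked in the proof of Corollary \ref{cor:infimum-over-bounded}, via Proposition 6.3 of \cite{KraSch99}), so $V^+(y_n\Zln_T Y)\le C\,V^+(\Zln_T)+D$. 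Since $\set{\Zln_T}_n\subseteq\sets{\Zl_T}{\ld\in\Lambda'}$ and $\Lambda'$ is $V$-relatively compact, $\set{V^+(\Zln_T)}_n$ is uniformly integrable, and domination then makes $\set{V^+(y_n\Zln_T Y)}_n$ uniformly integrable. For the negative part, the bound $y_n\le b$ together with $\sup_{\ld\in\Lambda'}\EE[\Zl_T Y]<\infty$ shows that $\set{y_n\Zln_T Y}_n$ is bounded in $\lone$, and the argument referenced in the Remark following Definition \ref{def:V-relative-weak-compactness} (the first part of the proof of Lemma 3.2 in \cite{KraSch99}), which produces uniform integrability of the $V^-$-image of any $\lone$-bounded family, yields uniform integrability of $\set{V^-(y_n\Zln_T Y)}_n$.

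Combining the two parts, $\set{V(y_n\Zln_T Y)}_n$ is uniformly integrable; together with the convergence in probability above, the Vitali convergence theorem gives $V(y_n\Zln_T Y)\to V(y_0 Z^{\ld^0}_T Y)$ in $\lone$, which is the asserted continuity into $\lone$. The final assertion is then immediate, as $\lone$-convergence forces convergence of the integrals, so $(y,\ld)\mapsto\EE[V(y\Zl_T Y)]$ is continuous. The delicate point throughout is the asymptotic-elasticity rescaling $V^+(ac\,z)\le C\,V^+(z)+D$, since this is precisely what converts the $V$-relative compactness of $\sets{\Zl_T}{\ld\in\Lambda'}$ into the uniform integrability of the rescaled family $\set{V^+(y_n\Zln_T Y)}_n$.
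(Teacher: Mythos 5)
Your proof is correct and follows essentially the same route as the paper's: continuity in probability via the appropriateness of $\tau$ and continuity of $V$, then uniform integrability split into the negative part (sublinear growth of $V^-$ applied to the $\lone$-bounded family $\sets{y\Zl_T Y}{y\in B,\ \ld\in\Lambda'}$) and the positive part (monotone domination $V^+(y\Zl_T Y)\le V^+(y_0\Zl_T)$ with $y_0$ built from the lower bounds on $y$ and $Y$, followed by the asymptotic-elasticity rescaling from Corollary \ref{cor:infimum-over-bounded} and $V$-relative compactness), concluded by Vitali's theorem. The only differences are cosmetic: you work with explicit sequences and the constants $a,c$ where the paper uses the compact segment $[\eps,1/\eps]$ and $\eps\essinf Y$, and you cite the remark after Definition \ref{def:V-relative-weak-compactness} where the paper invokes the De la Vall\'ee Poussin criterion directly.
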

\begin{proof}
Given that $V$ is a continuous function,  the mapping
 $(y,\ld)\mapsto V(y\Zl_T Y)$
is continuous in probability because  $(y,\ld)\mapsto y \Zl_T$ is. It
will, therefore,  be
enough to establish the uniform integrability of the family
\begin{equation}
   \nonumber
   \begin{split}
\sets{ V(yY \Zl_T)}{y\in B, \ld\in\Lambda'},
   \end{split}
\end{equation}
when $B$ is a compact segment of the form $[\eps,1/\eps]$, $\eps>0$.
The boundedness in $\lone$ of the
family $\sets{y Y\Zl_T}{y\in B,\ \ld\in\Lambda'}$ and
the fact that $\lim_{y\to\infty} \tfrac{V^-(y)}{y}=0$, coupled
with the De la Vall{\' e}e Poussin  criterion, imply that
the family $\sets{V^-(y Y\Zl_T)}{y\in B,\ld\in\Lambda'}$ is
uniformly integrable. As for
the positive parts, it will be enough to note that
$V^+(y Y\Zl_T)\leq V^+(y_0 \Zl_T)$, where $y_0=\eps \essinf Y>0$,
and invoke the argument concluding
the proof of Corollary \ref{cor:infimum-over-bounded} to reach the
conclusion that the positive parts
\[\sets{V^+(y Y\Zl_T)}{y\in B, \ld\in\Lambda'}\]
form a uniformly integrable family as well.
\end{proof}
\begin{lemma}
\label{lem:v-upper-semicontinuous}
The function
\begin{equation}
   \nonumber
   \begin{split}
(y,\ld)\mapsto \vl(y),
   \end{split}
\end{equation}
mapping $(0,\infty)\times \Lambda'$ into $\R$ is upper semi-continuous
(with respect to the product topology).
\end{lemma}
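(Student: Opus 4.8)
The plan is to realize $\vl(y)$ as a pointwise infimum of a family of jointly continuous functions indexed by a set that does \emph{not} depend on $\ld$, and then invoke the elementary fact that an infimum of continuous functions is upper semi-continuous.

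First I would record that the standing $V$-relative compactness of $\Lambda'$ makes the family $\sets{V(\Zl_T)}{\ld\in\Lambda'}$ uniformly integrable, hence bounded in $\lone$; in particular $\EE[V^+(\Zl_T)]<\infty$ for every $\ld\in\Lambda'$. Thus Corollary \ref{cor:infimum-over-bounded} applies for each $\ld\in\Lambda'$ and furnishes the representation
\begin{equation}
\nonumber
\vl(y)=\inf_{L\in\BB}\EE[V(y\Zl_T\EN(L)_T)],
\end{equation}
in which the index set $\BB$ of local martingales strongly orthogonal to $M$ with $\EN(L)_T$ bounded below by a positive constant is fixed once and for all, independently of $\ld$.

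Next, for each fixed $L\in\BB$ I would show that the map $(y,\ld)\mapsto\EE[V(y\Zl_T\EN(L)_T)]$ is continuous on $(0,\infty)\times\Lambda'$. This is exactly the content of Lemma \ref{lem:continuous-in-probability} applied with $Y=\EN(L)_T$, provided its two hypotheses hold. The lower bound $Y\ge c>0$ is built into the definition of $\BB$. The remaining requirement, $\sup_{\ld\in\Lambda'}\EE[\Zl_T\EN(L)_T]<\infty$, is where the key observation enters: since $\langle L,M\rangle\equiv 0$, the product $\Zl\EN(L)=\EN(-\ld\cdot M+L)$ is a non-negative local martingale with initial value $1$, hence a supermartingale, so $\EE[\Zl_T\EN(L)_T]\le 1$ uniformly in $\ld$. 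Lemma \ref{lem:continuous-in-probability} then yields the desired continuity of each $(y,\ld)\mapsto\EE[V(y\Zl_T\EN(L)_T)]$ into $\R$.

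Finally, I would conclude by the general principle that a pointwise infimum of a family of upper semi-continuous (here, continuous) functions is itself upper semi-continuous: for each $a\in\R$ the sublevel set $\set{\vl(y)<a}=\bigcup_{L\in\BB}\set{\EE[V(y\Zl_T\EN(L)_T)]<a}$ is a union of open sets, hence open, which is equivalent to upper semi-continuity. The main point to get right is the uniform bound $\EE[\Zl_T\EN(L)_T]\le 1$, as it is precisely what permits a single application of Lemma \ref{lem:continuous-in-probability} to handle all $\ld\in\Lambda'$ simultaneously; everything else is a formal consequence of the infimum representation of $\vl$.
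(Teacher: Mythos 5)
Your proof is correct and follows essentially the same route as the paper's: the representation $\vl(y)=\inf_{L\in\BB}\EE[V(y\Zl_T\EN(L)_T)]$ from Corollary \ref{cor:infimum-over-bounded}, continuity of each map $(y,\ld)\mapsto\EE[V(y\Zl_T\EN(L)_T)]$ via Lemma \ref{lem:continuous-in-probability}, and upper semi-continuity of a pointwise infimum of continuous functions. Your explicit verification of the hypotheses of Lemma \ref{lem:continuous-in-probability} (in particular the uniform bound $\EE[\Zl_T\EN(L)_T]\le 1$ obtained from the supermartingale property of $\EN(-\ld\cdot M+L)$) is a detail the paper leaves implicit, but the argument is the same.
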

\begin{proof}
By Corollary \ref{cor:infimum-over-bounded}
 the dual value function $\vl$ has the
following representation
\begin{equation}
   \nonumber
   \begin{split}
\vl(y)=\inf_{Y} \EE[V(y Y \Zl_T)],
   \end{split}
\end{equation}
where the infimum is taken over $Y$ of the form $Y=\EN(L)$, where
$L\in\BB$, i.e.\ $\EN(L)$ is bounded away from zero. For a such a
random variable $Y$, by Lemma \ref{lem:continuous-in-probability},
the mapping $(y,\ld)\mapsto \EE[V(yY\Zl_T)]$ is continuous.
Therefore, $(y,\ld)\mapsto\vl(y)$  is $\tau$-upper semi-continuous
as an infimum of continuous mappings.
\end{proof}

\begin{lemma}
\label{lem:dual-value-functions-converge} The mapping
$(y,\ld)\mapsto \vl(y)$ is continuous on
$(0,\infty)\times\Lambda'$ (with respect to the product topology).
\end{lemma}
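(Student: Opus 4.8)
The plan is to combine the upper semi-continuity already supplied by Lemma~\ref{lem:v-upper-semicontinuous} with a matching lower bound, and then to upgrade the resulting pointwise-in-$y$ continuity to joint continuity by exploiting convexity in $y$. Since $\tau$ is metrizable, $\Lambda'\times(0,\infty)$ is metrizable and sequential continuity suffices, so I fix $y>0$ and a sequence $\ld^n\to\ld^0$ in $\tau$ (all in $\Lambda'$). Lemma~\ref{lem:v-upper-semicontinuous} already yields $\limsup_n \vln(y)\le\vlz(y)$, so the whole content is the reverse inequality $\liminf_n\vln(y)\ge\vlz(y)$. Passing to a subsequence I assume $\vln(y)\to\ell:=\liminf_n\vln(y)$, that this is finite (otherwise nothing to prove), and, using that $\tau$ is appropriate, that $\Zln_T\to Z^{\ld^0}_T$ in probability and hence a.s.

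For the lower bound I work with the dual optimizers. By Theorem~\ref{thm:Kramkov-Schachermayer}(b) and Corollary~\ref{cor:inf-attained-at-a-local-martingale}, the minimizer $\hY^{y,\ld^n}$ is a local martingale of the form $\Zln\EN(L^n)$ with $L^n$ strongly orthogonal to $M$, and $\vln(y)=\EE[V(y\Zln_T\EN(L^n)_T)]$. The terminal values $\hY^{y,\ld^n}_T$ are nonnegative with $\EE[\hY^{y,\ld^n}_T]\le1$, hence bounded in $\lone$; Komlos' lemma then produces convex combinations $\tilde Y^n\in\conv(\hY^{y,\ld^n}_T,\hY^{y,\ld^{n+1}}_T,\dots)$, say $\tilde Y^n=\sum_k\mu^n_k\hY^{y,\ld^k}_T$, converging a.s.\ to some $Y^\infty_T\ge0$. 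Convexity of $V$ gives $\EE[V(y\tilde Y^n)]\le\sum_k\mu^n_k\, v^{\ld^k}(y)$, whence $\limsup_n\EE[V(y\tilde Y^n)]\le\ell$. The negative parts $\{V^-(y\tilde Y^n)\}$ are uniformly integrable by De la Vall\'ee Poussin (exactly as in Lemma~\ref{lem:continuous-in-probability}), so they converge in $\lone$, while Fatou's lemma applies to the nonnegative parts $V^+(y\tilde Y^n)\to V^+(yY^\infty_T)$; combining these two facts I pass to the limit and obtain $\EE[V(yY^\infty_T)]\le\ell=\liminf_n\vln(y)$.

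It remains to show $\vlz(y)\le\EE[V(yY^\infty_T)]$, i.e.\ that $Y^\infty_T$ is realized by (or dominated by) an element of $\YY^{\ld^0}$; this is the crux. Applying the same convex weights to the orthogonal factors, the process $N^n:=\sum_k\mu^n_k\EN(L^k)$ is a nonnegative local martingale with $N^n_0=1$ and $\ab{N^n,M}\equiv0$. For any $X\in\XX^{\ld^0}(1)$ the product $XZ^{\ld^0}$ is a nonnegative local martingale, and $\ab{XZ^{\ld^0},N^n}\equiv0$ because the martingale part of $XZ^{\ld^0}$ is an $M$-integral while $N^n$ is orthogonal to $M$; hence $(XZ^{\ld^0})N^n$ is a nonnegative local martingale, thus a supermartingale, and taking $X\equiv1$ shows $Z^{\ld^0}N^n\in\YY^{\ld^0}$. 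Since $\Zln_T\to Z^{\ld^0}_T$ a.s.\ and $\sum_k\mu^n_k\EN(L^k)_T$ converges a.s., I would identify the limit as $Y^\infty_T=Z^{\ld^0}_T\,\eta$ with $\eta=\lim_n N^n_T$, and then exhibit a member of $\YY^{\ld^0}$ with terminal value $Z^{\ld^0}_T\eta$ as a Fatou/supermartingale-closure limit of the $Z^{\ld^0}N^n$; feasibility then yields $\vlz(y)\le\EE[V(yZ^{\ld^0}_T\eta)]=\EE[V(yY^\infty_T)]\le\liminf_n\vln(y)$, so $\lim_n\vln(y)=\vlz(y)$. The main obstacle is precisely this identification: the convex weights couple the varying deflators $\Zln_T$ to the possibly unbounded orthogonal factors $\EN(L^n)_T$, so that replacing $\Zln_T$ by $Z^{\ld^0}_T$ inside the sum requires either a simultaneous Komlos/diagonal argument on the two sequences with control of the cross-terms, or a direct appeal to the Fatou-closedness of the dual domain $\YY^{\ld^0}$ under convex combinations — using only the in-probability convergence $\Zln_T\to Z^{\ld^0}_T$ granted by appropriateness.

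Finally I would upgrade to joint continuity. For each $n$ the function $\vln(\cdot)$ is convex and finite on $(0,\infty)$ by Theorem~\ref{thm:Kramkov-Schachermayer}(a), and the previous steps give $\vln\to\vlz$ pointwise on $(0,\infty)$; since pointwise convergence of finite convex functions on an open interval is automatically uniform on compact subsets, for any $y_n\to y^0$ the estimate $|\vln(y_n)-\vlz(y^0)|\le|\vln(y_n)-\vlz(y_n)|+|\vlz(y_n)-\vlz(y^0)|$ shows $\vln(y_n)\to\vlz(y^0)$. Thus $(\ld,y)\mapsto\vl(y)$ is sequentially continuous on the metrizable space $\Lambda'\times(0,\infty)$, hence continuous.
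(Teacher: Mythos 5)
You follow essentially the same route as the paper (upper semi-continuity from Lemma \ref{lem:v-upper-semicontinuous}, Komlos convex combinations of the dual optimizers $Z^{\ld^n}_T\EN(L^n)_T$ supplied by Corollary \ref{cor:inf-attained-at-a-local-martingale}, Fatou's lemma with uniformly integrable negative parts, and closedness of the dual domain), but your argument stops exactly at its crucial step, and this is a genuine gap --- one you yourself flag as ``the main obstacle''. Concretely: you never prove that the Komlos limit $Y^\infty_T$ of the convex combinations $\tilde Y^n=\sum_k\mu^n_k Z^{\ld^k}_T\EN(L^k)_T$ factors as $Z^{\ld^0}_T\,\eta$, equivalently that $N^n_T=\sum_k\mu^n_k\EN(L^k)_T$ converges a.s.; this does not follow from the convergence of the products alone without an extra argument. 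Without this identification you cannot place $Y^\infty_T$ (up to domination) among the terminal values of $\YY^{\ld^0}$, so the key inequality $\vlz(y)\le\EE[V(yY^\infty_T)]$ --- the whole content of lower semi-continuity --- is not established. Naming two possible repair strategies (a simultaneous Komlos/diagonal argument, or an unspecified appeal to Fatou-closedness) without carrying either out does not close the proof.

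The missing idea is elementary, and it is precisely the paper's Lemma \ref{lem:mini}: if $a_k\to a>0$ and $\sum_{k\ge n}a_kb^n_k\to c$ for positive $b^n_k$, then $\sum_{k\ge n}b^n_k\to c/a$. Apply it $\omega$ by $\omega$, with $a_k=Z^{\ld^k}_T(\omega)$ (which converges a.s.\ to $Z^{\ld^0}_T(\omega)>0$ along a further subsequence, by appropriateness of $\tau$) and $b^n_k=\mu^n_k\EN(L^k)_T(\omega)$: the a.s.\ convergence $\tilde Y^n\to Y^\infty_T$ then forces $N^n_T\to Y^\infty_T/Z^{\ld^0}_T$ a.s. No control of cross-terms is needed; the decoupling is a pointwise arithmetic fact, not a stochastic one. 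Once this is in place, your own observations complete the argument exactly as the paper does: each $N^n$, being a convex combination of the $\EN(L^k)$, lies in $\YY^0$; $\YY^0$ is closed under convergence in probability (Lemma 4.1, p.~926 of \cite{KraSch99}); and multiplication by $Z^{\ld^0}$ maps $\YY^0$ into $\YY^{\ld^0}$ since $XZ^{\ld^0}$ is an $M$-integral for every $X\in\XX^{\ld^0}(1)$ --- whence $Y^\infty_T$ is the terminal value of an element of $\YY^{\ld^0}$ and $\vlz(y)\le\EE[V(yY^\infty_T)]\le\liminf_n\vln(y)$. Your remaining structural deviation is harmless and correct: you fix $y$ and recover joint continuity afterwards from locally uniform convergence of finite convex functions on open intervals, whereas the paper runs the Komlos argument directly along a joint sequence $(y_n,\ld^n)$; both upgrades are valid.
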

\begin{proof}
Thanks to the result of Lemma \ref{lem:v-upper-semicontinuous}, it
is enough to show that $(y,\ld)\mapsto \vl(y)$ is lower
semi-continuous. Let $\seeq{y_n,\ld^n}$ in $(0,\infty)\times
\Lambda'$ converge to $(y,\ld)\in (0,\infty)\times\Lambda'$ and we
need to prove that $\vl(y)\leq \liminf \vln(y_n)$. By passing to a
subsequence that realizes the liminf, we can assume that the
sequence $\seeq{\vln(y_n)}$ converges and, furthermore, by passing
to yet another subsequence we may assume that $Z^{\lambda^n}_T\to
Z^{\lambda}_T$ almost surely.

Corollary \ref{cor:inf-attained-at-a-local-martingale} states that
$\vln(y_n)=\EE[V(y_n \hYyln_T)]$, where the  optimizer $\hYyln_T$
can be written as $\hYyln_T= \Zln_T \EN(L^n)_T$, for some local
martingale $L^n=L^{y_n,\lambda^n}$ that is strongly orthogonal to
$M$. Komlos' lemma grants the existence of an almost surely
convergent sequence in $conv\left(y_n Z_T^{\lambda^n}
\EN(L^n)_T,y_{n+1} Z_T^{\lambda^{n+1}} \EN(L^{n+1})_T,...\right)$.
So there exist a double array $\{\alpha^n_k\}$ with $n\in\N$, $k
\in \{n,...,K(n)\}$ for some $K(n)\in \N$, of positive weights and
a random variable $h\in\lzer_+$ such that
\begin{equation}
    \nonumber
    \begin{split}
\sum_{k=n}^{K(n)} \alpha^n_k=1,\text{ for all $n$, and }
h_n=\sum_{k=n}^{K(n)} \alpha^n_k y_kZ^{\lambda^k}_T \EN(L^k)_T\to
h,
    \end{split}
\end{equation}
where the convergence is $\PP$-almost surely. Since also
$y_nZ^{\lambda^n}_T \to yZ^{\lambda}_T$ almost surely, we have
(see Lemma \ref{lem:mini} below)
\begin{equation}
    \nonumber
    \begin{split}
f_n=\sum_{k=n}^{K(n)} \alpha^n_k \EN(L^k)_T \to
\frac{h}{yZ},\text{ almost surely.}
    \end{split}
\end{equation}
The random variables $f_n$ are all in $\YY^0=\YY^{\ld\equiv
  0}$, which is closed with respect to convergence in probability,
thanks to Lemma 4.1., p.~926 in \cite{KraSch99}. Therefore, the
limit of $f_n$ will also be in $\YY^0$, and, consequently,
$\frac{h}{y}\in\YYl$. By Fatou's Lemma (keeping in mind the
uniform integrability of the family of negative parts
$\sets{V^-(Y)}{Y\in\lzer_+,\,\EE[Y]\leq c}$ for any $c>0$) we have
\begin{equation}
    \nonumber
    \begin{split}
\vl(y)&\leq \EE[V(h)]=\EE[V(\liminf_n h_n)]\\
&\leq \liminf_n  \EE\left[V\left(\sum_{k=n}^{K(n)} \alpha^n_k y_kZ^{\lambda^k}_T \EN(L^k)_T\right)\right]\\
&\leq \liminf_n \sum_{k=n}^{K(n)} \alpha^n_k \EE\left[V\left(y_k \hYylk_T\right)\right]\\
&=\liminf_n \sum_{k=n}^{K(n)} \alpha^n_k v^{\lambda^k}(y_k) =
\lim_n \vln(y_n).
    \end{split}
\end{equation}

\end{proof}

\begin{lemma}\label{lem:mini} Let $\{a_n\}_{n\in\N}$ be a sequence
of positive numbers converging to $a>0$ and assume
$\sum_{k=n}^\infty a_kb_k^n$ converges to $c>0$ where $\{b_k^n\}$
is some double array of positive numbers. Then $\sum_{k=n}^\infty
b^n_k$ converges to $c/a$.
\end{lemma}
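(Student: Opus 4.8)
The plan is to treat this as a purely real-analytic squeeze argument, exploiting the crucial structural feature that the lower index of summation is tied to $n$. Write $S_n=\sum_{k=n}^\infty a_kb_k^n$ and $T_n=\sum_{k=n}^\infty b_k^n$; the hypothesis is $S_n\to c>0$ (with each inner series convergent) and the goal is $T_n\to c/a$. The idea is simply that, since $a_k\to a$, the weights $a_k$ appearing in $S_n$ are all nearly equal to $a$ once $n$ is large, so $S_n\approx a\,T_n$.

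First I would fix $\eps\in(0,a)$ and use $a_n\to a$ to choose $N$ with $a-\eps<a_k<a+\eps$ for all $k\ge N$. The key observation is that for every $n\ge N$, each index $k$ occurring in $S_n$ and $T_n$ satisfies $k\ge n\ge N$, so the corresponding $a_k$ lies in $(a-\eps,a+\eps)$. Since the $b_k^n$ are non-negative, a term-by-term comparison gives, for all $n\ge N$,
\[
(a-\eps)\,T_n\ \le\ S_n\ \le\ (a+\eps)\,T_n .
\]
Because $a-\eps>0$ and $S_n<\infty$, the left inequality forces $T_n<\infty$ (an infinite $T_n$ would make the left side infinite), so all series are finite and we may divide to obtain $\tfrac{S_n}{a+\eps}\le T_n\le \tfrac{S_n}{a-\eps}$.

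Letting $n\to\infty$ and using $S_n\to c$ yields
\[
\frac{c}{a+\eps}\ \le\ \liminf_n T_n\ \le\ \limsup_n T_n\ \le\ \frac{c}{a-\eps}.
\]
Since $\eps\in(0,a)$ was arbitrary, letting $\eps\downarrow 0$ squeezes both the $\liminf$ and the $\limsup$ to $c/a$, which proves $T_n\to c/a$.

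There is no deep obstacle here; the only points requiring care are (i) choosing $\eps<a$ so that $a-\eps>0$, which both preserves the direction of the inequality upon division and certifies finiteness of $T_n$, and (ii) recognizing that the whole argument hinges on the summation beginning at $k=n$: if the lower limit were a fixed constant, the early weights $a_k$ would not be forced close to $a$ and the comparison $(a-\eps)T_n\le S_n\le(a+\eps)T_n$ would break down. I note in passing that the positivity $c>0$ is not actually used — the same squeeze gives $T_n\to c/a$ whenever $c\ge 0$ — but the stated hypothesis is all that is needed for the application in Lemma~\ref{lem:dual-value-functions-converge}.
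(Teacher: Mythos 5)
Your proof is correct and takes essentially the same route as the paper's: both are squeeze arguments that exploit the lower summation index $k\ge n$ to force every weight $a_k$ into a small interval around $a$ (the paper writes this multiplicatively, $a(1-\epsilon)\le a_k\le a(1+\epsilon)$, and concludes $\frac{1}{a(1+\epsilon)}S_n\le T_n\le\frac{1}{a(1-\epsilon)}S_n$ before letting $n\to\infty$ and $\epsilon\downarrow 0$). Your explicit attention to the finiteness of $T_n$ and to the requirement $\eps<a$ is a minor tightening of the same argument, not a different approach.
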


\begin{proof}
Let $\epsilon >0$ be arbitrary. We can find $N(\epsilon)\in\N$
such that $(1+\epsilon) \ge a_k/a \ge (1-\epsilon)$ for all $k\ge
N(\epsilon)$. Therefore, for $n\ge N(\epsilon)$ we have
$$
\frac1{a(1+\epsilon)}\sum_{k=n}^\infty a_kb^n_k \le
\sum_{k=n}^\infty b^n_k \le \frac1{a(1-\epsilon)}\sum_{k=n}^\infty
a_kb_k^n
$$
and hence, passing $n$ to infinity yields the desired conclusion.
\end{proof}

\begin{proposition}
\label{pro:value-functions-converge}
The following mappings are continuous on  $(0,\infty)\times\Lambda'$
\begin{align*}
     (y,\ld) & \mapsto \vl(y), &
     (y,\ld) & \mapsto (\vl)'(y), &
     (x,\ld) & \mapsto \ul(x), &
     (x,\ld) & \mapsto (\ul)'(x).
   \end{align*}
 \end{proposition}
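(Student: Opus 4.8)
The plan is to leverage the joint continuity of $(y,\ld)\mapsto\vl(y)$ already proved in Lemma~\ref{lem:dual-value-functions-converge} together with the convex-analytic structure supplied by Theorem~\ref{thm:Kramkov-Schachermayer}, and to propagate continuity to the three remaining maps. Since $\tau$ is metrizable and the usual topology on $(0,\infty)$ is metrizable, the product topology on $(0,\infty)\times\Lambda'$ is metrizable as well, so it is enough to verify sequential continuity throughout; accordingly I fix a convergent sequence $(y_n,\ld^n)\to(y_0,\ld^0)$ in $(0,\infty)\times\Lambda'$ (respectively $(x_n,\ld^n)\to(x_0,\ld^0)$).

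First I would establish continuity of $(y,\ld)\mapsto(\vl)'(y)$. For each fixed $\ld$ the function $\vl$ is convex, so for any increment $h>0$ the difference quotients sandwich the derivative,
\[
\frac{\vln(y_n)-\vln(y_n-h)}{h}\le(\vln)'(y_n)\le\frac{\vln(y_n+h)-\vln(y_n)}{h}.
\]
Letting $n\to\infty$ with $h$ held fixed, the joint continuity of $\vl$ from Lemma~\ref{lem:dual-value-functions-converge} bounds $\liminf_n(\vln)'(y_n)$ from below by $\tfrac{1}{h}\big(\vl(y_0)-\vl(y_0-h)\big)$ and $\limsup_n(\vln)'(y_n)$ from above by $\tfrac{1}{h}\big(\vl(y_0+h)-\vl(y_0)\big)$. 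Sending $h\downarrow0$ and using that $\vl(\cdot)$ is $C^1$ at $y_0$ (Theorem~\ref{thm:Kramkov-Schachermayer}(a)) collapses both bounds to $(\vl)'(y_0)$, which forces $(\vln)'(y_n)\to(\vl)'(y_0)$.

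Next I would treat the primal derivative $(x,\ld)\mapsto(\ul)'(x)$ by a duality-plus-inversion argument. The conjugacy in Theorem~\ref{thm:Kramkov-Schachermayer}(a) yields the Fenchel derivative relation $-(\vl)'\big((\ul)'(x)\big)=x$, so, writing $\phi^{\ld}(y):=-(\vl)'(y)$, the map $x\mapsto(\ul)'(x)$ is precisely the inverse of $\phi^{\ld}$. Because $\vl$ is strictly convex, each $\phi^{\ld}$ is continuous and strictly decreasing, and $\phi$ is jointly continuous by the previous step. Setting $y_n:=(\uln)'(x_n)$ (so that $\phi^{\ld^n}(y_n)=x_n$) and $y_0:=(\ul)'(x_0)$, I would squeeze $y_n$ toward $y_0$: for any $y'<y_0<y''$, strict monotonicity gives $\phi^{\ld^0}(y')>x_0>\phi^{\ld^0}(y'')$, and joint continuity together with $x_n\to x_0$ yields $\phi^{\ld^n}(y')>x_n>\phi^{\ld^n}(y'')$ for all large $n$; monotonicity of $\phi^{\ld^n}$ then pins $y_n\in(y',y'')$. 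As $y'\uparrow y_0$ and $y''\downarrow y_0$ are arbitrary, $y_n\to y_0$, establishing continuity of $(x,\ld)\mapsto(\ul)'(x)$. Note that this sandwich uses only the values of $\phi$ at interior points, so no control of boundary behavior is needed.

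Finally, the primal value function is recovered through the same conjugacy as the composition $\ul(x)=\vl\big((\ul)'(x)\big)+x\,(\ul)'(x)$, whose continuity is immediate from Lemma~\ref{lem:dual-value-functions-converge} and the continuity of $(x,\ld)\mapsto(\ul)'(x)$ just obtained. I expect the main obstacle to be the derivative-convergence step for $(\vl)'$: one must take the joint limit in $n$ \emph{before} sending the increment $h\downarrow0$, and rely essentially on the $C^1$-regularity of the limiting $\vl$ (and, dually, on the strict convexity that makes the monotone inversion in the third step valid). Both of these properties are furnished by Theorem~\ref{thm:Kramkov-Schachermayer}, so the argument reduces to organizing these classical convex-analytic facts in the correct order.
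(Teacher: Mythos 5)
Your proof is correct, and its overall architecture matches the paper's: propagate the joint continuity of $(y,\ld)\mapsto\vl(y)$ from Lemma \ref{lem:dual-value-functions-converge} to $(\vl)'$, then invert through the conjugacy to get $(\ul)'$, and finally recover $\ul$ from the identity $\ul(x)=\vl\bigl((\ul)'(x)\bigr)+x\,(\ul)'(x)$. The difference lies in the key technical ingredient: the paper invokes Theorem 25.7 of Rockafellar twice --- first to get convergence of $(\vln)'$ to $(\vl)'$ uniformly on compact intervals, and again at the end to upgrade the pointwise convergence of $\uln(x)$ and $(\uln)'(x)$ to \emph{joint} continuity on $(0,\infty)\times\Lambda'$ --- whereas you replace this citation with a self-contained difference-quotient sandwich that works directly along a diagonal sequence $(y_n,\ld^n)\to(y_0,\ld^0)$, using only convexity of each $\vln$ and the $C^1$-regularity of the limit $\vl$ from Theorem \ref{thm:Kramkov-Schachermayer}(a). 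This buys you two things: the argument is elementary (no external convex-analysis theorem beyond what Theorem \ref{thm:Kramkov-Schachermayer} supplies), and joint continuity is obtained in one pass, so no separate upgrade step is needed. Your monotone-inversion squeeze for $(\uln)'(x_n)$ is essentially the same idea as the paper's $\eps$-argument with $y(\eps)=(\ul)'(x)+\eps$ (both rest on strict monotonicity of $(\vl)'$ and the relation $(\vl)'\bigl((\ul)'(x)\bigr)=-x$), just executed symmetrically with two test points $y'<y_0<y''$; it also correctly handles moving arguments $x_n\to x_0$, which the paper instead delegates to the uniform convergence. The only point to be careful about --- which you do address --- is that in the sandwich one must take $n\to\infty$ at fixed $h$ before letting $h\downarrow 0$, and that $h$ must be small enough that $y_n-h$ stays in $(0,\infty)$ for large $n$.
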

\begin{proof}
Let $\{\ld^n\}_{n\in\N}$ be a sequence in $\Lambda'$ converging
appropriately to $\ld\in\Lambda'$. Thanks to the result of Lemma
\ref{lem:dual-value-functions-converge} and the convexity of the
dual value functions, Theorem 25.7 in \cite{Roc70} states that the
derivatives $(\vln)'(\cdot)$ converge towards $(\vl)'(\cdot )$, as
well, uniformly on compact intervals in $(0,\infty)$. The uniform
convergence on compact intervals also holds for the original
sequence of functions $\vln(\cdot)$.

To proceed, pick $x>0$ and $\eps>0$, and define
$y(\epsilon)\triangleq (\ul)'(x)+\epsilon$.
The strict increase of $(\vl)'(\cdot)$ implies that
\begin{equation*}
\lim_n (\vln)'(y(\epsilon))= (\vl)'(y(\epsilon))=
(\vl)'((\ul)'(x)+\epsilon)>(\vl)'((\ul)'(x))=-x,
\end{equation*}
where the last inequality follows directly from continuous
differentiability and conjugacy of $\ul$ and
$\vl$. Consequently, for large $n$,
 we have $-(\vln)'(y(\epsilon)) < x$.
Since $(\uln)'(\cdot)$ is strictly decreasing for each $n\in\N$, we get
\begin{equation*}
(\ul)'(x) + \epsilon = y(\epsilon)= (\uln)'\Big(
- (\vln)'(y(\epsilon))\Big) > (\uln)'(x),
\end{equation*}
for large $n$, implying that $\limsup_n (\uln)'(x) \leq
(\ul)'(x)$. The other inequality, namely $\liminf_n (\uln)'(x)
\geq (\ul)'(x)$, can be proved similarly. By the results obtained
so far we have
\begin{equation*}
\uln(x) = \vln\Big((\uln)'(x)\Big) + x\,(\uln)'(x)\to
\vl\Big((\ul)'(x)\Big) + x\,(\ul)'(x)= \ul(x).
\end{equation*}
Finally, the joint continuity of value functions and their
derivatives on $\Lambda'\times (0,\infty)$ is a consequence of the
already mentioned uniform convergence from Theorem 25.7 in
\cite{Roc70}.
\end{proof}
\subsection{Continuity of the optimal terminal wealths}
\begin{lemma}
\label{lem:continuity-of-terminal-wealths}
Let $\Lambda'\subseteq\Lambda_M$ be $V$-relatively compact
and let $\tau$ be an appropriate topology on $\Lambda'$.
The function
\begin{equation}
   \nonumber
   \begin{split}
(x,\ld)\mapsto \hXxl_T,
   \end{split}
\end{equation}
where $\hXxl_T$ is the unique optimal terminal wealth in the market $\Sl$,
is continuous from $(0,\infty)\times \Lambda'$ to $\lzer$ (equipped with the
topology of convergence in probability).
\end{lemma}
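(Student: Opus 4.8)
The plan is to reduce the claim to a statement about the \emph{dual} optimizers and then transfer it to the primal side through the pointwise relation $\hXxl_T=-V'(y\hYyl_T)$ from Theorem~\ref{thm:Kramkov-Schachermayer}(c). Since convergence in probability is metrizable and sequential, it suffices to fix a sequence $(x_n,\ld^n)\to(x,\ld)$ in $(0,\infty)\times\Lambda'$ and show $\hXxln_T\to\hXxl_T$ in probability; passing freely to subsequences, I may assume (appropriateness of $\tau$) that $\Zln_T\to\Zl_T$ almost surely. Writing $y_n=(\uln)'(x_n)$ and $y=(\ul)'(x)$, Proposition~\ref{pro:value-functions-converge} gives $y_n\to y$ and $\vln(y_n)\to\vl(y)$. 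By Corollary~\ref{cor:inf-attained-at-a-local-martingale} the dual optimizer has the form $\hYyl_T=\Zl_T\EN(\Lyl)_T$, which is a.s.\ strictly positive; set $\eta=y\hYyl_T$ and $\eta_n=y_n\hYyln_T$. Because $-V'$ is continuous on $(0,\infty)$ and $\eta>0$ a.s., it is enough to prove the dual convergence $\eta_n\to\eta$ in probability, after which the continuous-mapping theorem yields $\hXxln_T=-V'(\eta_n)\to -V'(\eta)=\hXxl_T$.

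The device for obtaining $\eta_n\to\eta$ is a midpoint/strict-convexity estimate, and its feasibility rests on the market-independence of the orthogonal part. Concretely, for any $L^\ast\in\BB$ (so $L^\ast$ is strongly orthogonal to $M$ and $\EN(L^\ast)_T$ is bounded below by a positive constant) the product $\Zln\EN(L^\ast)$ lies in $\YYln$, exactly as in Corollary~\ref{cor:infimum-over-bounded}; since $\YYln$ is convex and $\hYyln=\Zln\EN(L^n)\in\YYln$, the average $\tfrac12\Zln\big(\EN(L^\ast)+\EN(L^n)\big)$ is again in $\YYln$. Setting $\theta_n=y_n\Zln_T\EN(L^\ast)_T$, the definition of $\vln$ and strict convexity of $V$ give
\begin{equation}
   \nonumber
   \begin{split}
\vln(y_n)&\leq\EE\Big[V\big(\tfrac12(\theta_n+\eta_n)\big)\Big]\leq\tfrac12\EE[V(\theta_n)]+\tfrac12\vln(y_n)-\EE[\delta_n],
   \end{split}
\end{equation}
where $\delta_n=\tfrac12 V(\theta_n)+\tfrac12 V(\eta_n)-V\big(\tfrac12(\theta_n+\eta_n)\big)\geq0$ is the Jensen gap, so that $\EE[\delta_n]\leq\tfrac12\big(\EE[V(\theta_n)]-\vln(y_n)\big)$. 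Choosing $L^\ast=L^\ast_\varepsilon\in\BB$ that is $\varepsilon$-optimal for $\vl(y)$ in the sense of Corollary~\ref{cor:infimum-over-bounded}, Lemma~\ref{lem:continuous-in-probability} applies (as $\EN(L^\ast)_T$ is bounded below and $\sup_n\EE[\Zln_T\EN(L^\ast)_T]\leq1$ by the supermartingale property), giving $\EE[V(\theta_n)]\to\EE[V(y\Zl_T\EN(L^\ast)_T)]<\vl(y)+\varepsilon$; together with $\vln(y_n)\to\vl(y)$ this forces $\limsup_n\EE[\delta_n]<\varepsilon/2$.

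Since $\theta_n\to\theta^\ast_\varepsilon:=y\Zl_T\EN(L^\ast)_T>0$ a.s.\ while $\EE[\delta_n]$ is asymptotically small, a modulus-of-strict-convexity argument (using the $\lone$-bound $\EE[\eta_n]\leq y_n$ to rule out escape of mass towards $0$ or $\infty$) shows that $\eta_n$ is, up to a probability error controlled by $\varepsilon$, close in probability to $\theta^\ast_\varepsilon$. Finally, the same midpoint estimate run \emph{within the single market} $\ld$ — where $\tfrac12(\theta^\ast_\varepsilon+\eta)\in\YYl$ by convexity — yields $\EE\big[\tfrac12 V(\theta^\ast_\varepsilon)+\tfrac12 V(\eta)-V(\tfrac12(\theta^\ast_\varepsilon+\eta))\big]\leq\tfrac12\big(\EE[V(\theta^\ast_\varepsilon)]-\vl(y)\big)<\varepsilon/2$, whence $\theta^\ast_\varepsilon\to\eta$ in probability as $\varepsilon\downarrow0$. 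A triangle/diagonal argument over $\varepsilon$ then delivers $\eta_n\to\eta$ in probability, and the first paragraph finishes the proof. I expect the main obstacle to be precisely this upgrade from convergence of the dual \emph{values} to convergence of the dual \emph{optimizers}: Komlos-type arguments (as in Lemma~\ref{lem:dual-value-functions-converge}) return only convex combinations, so the strict-convexity gap must be converted into genuine convergence in probability, while simultaneously controlling the boundary behaviour of $V$ near $0$ and $\infty$ and absorbing the fact that the feasible comparison element $\theta^\ast_\varepsilon$ is only near-optimal.
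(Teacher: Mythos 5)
Your proposal is correct and follows essentially the same route as the paper's own proof: reduce to the dual optimizers via $-V'$ and Proposition \ref{pro:value-functions-converge}, transplant a bounded-below, near-optimal dual element of the limit market into the market $\Sln$ (legitimate because any orthogonal part $\EN(L)$ with $\langle L,M\rangle\equiv 0$ is feasible in every market), and convert the resulting suboptimality gap---which vanishes by Lemma \ref{lem:continuous-in-probability} and the convergence $\vln(y_n)\to\vl(y)$---into convergence in probability through a quantified strict-convexity midpoint estimate combined with a Markov truncation. The only deviation is the choice of comparison element: the paper regularizes the exact limit optimizer, taking $H^\delta=(1-\delta)\EN(\Lyl)_T+\delta$, whose suboptimality and distance from $\EN(\Lyl)_T$ are both explicitly of order $\delta$, so a single midpoint estimate suffices, whereas your $\varepsilon$-optimal $L^\ast_\varepsilon\in\BB$ from Corollary \ref{cor:infimum-over-bounded} forces a second, within-market midpoint estimate and a triangle/diagonal argument over $\varepsilon$; both implementations are sound.
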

\begin{proof}
By Theorem \ref{thm:Kramkov-Schachermayer}, the optimal terminal
wealth admits the representation with $y=(\ul)'(x)$,
$$
U'(\hXxl_T)=y \hYyl_T
$$
where $\hYyl_T$ attains the minimum in the dual problem
\eqref{equ:alternative-vl}. Thanks to the continuity of the
mappings $(x,\ld)\mapsto (\ul)'(x)$ and $x\to U(x)$ it suffices to
show that $(y,\ld)\mapsto y \hYyl_T$ is continuous in probability.
Since the topology $\tau$ is assumed to be metrizable, it is
enough to show that convergence of any sequence $(y_n,\ld^n)\in
(0,\infty)\times \Lambda'$ to $(y,\ld)\in (0,\infty)\times
\Lambda'$ implies the convergence of $y_n \hYyln_T\to y \hYyl_T$
in probability. By Proposition
\ref{pro:structure-of-supermartingales}, each $\hYyln_T$ can be
expressed as the product
\begin{equation}
   \nonumber
   \begin{split}
\hYyln_T=Z^n H^n,\text{ where $Z^n=\Zln_T$ and $H^n=\EN(\Lyln)_T$,}
   \end{split}
\end{equation}
for some local martingale $\Lyln$, strongly orthogonal to $M$. In
the same way, we will write $\hYyl_T=Z H$, with analogous
definitions for $Z$ and $H$.

We start our analysis by noting that for any $\delta>0$, with
$H^{\delta}=(1-\delta)H+\delta$, we have by Markov's inequality
\begin{equation*}
   \begin{split}
\PP[ y_n Z^n |H^n-H|> 2 \eps ] & \leq \PP[ y_n Z^n |H^n-H^{\delta}|>\eps]+ \PP[ y_n Z^n \delta |1-H|>\eps]\\
&\leq \PP[ y_n Z^n |H^n-H^{\delta}|>\eps]+\tfrac{\delta}{\eps} \EE[ y_n Z^n |1-H|]\\
&\leq \PP[ y_n Z^n |H^n-H^{\delta}|>\eps]+\tfrac{2y_n \delta}{\eps}
   \end{split}
\end{equation*}
since $\EE[Z^n|1-H|]\leq \EE[Z^n]+\EE[Z^n H]\leq 2$. We then pick a
constant $N>0$ and by the strict concavity of $V$ we can find a
positive constant $\beta=\beta(\eps,N)$ with the property that
\begin{equation*}
V\left(\frac{a+b}{2}\right) < \frac{V(a)+V(b)}2 -\beta.
\end{equation*}
if $a$ and $b$ are positive numbers with $|a-b| > \epsilon$ and $(a
+ b)\le N$. This property, combined with the convexity of $V$, leads
to the following estimate
\begin{multline}
\label{equ:central-inequality} \EE\left[ V\Big(y_nZ^n\frac{H^n +
H^{\delta}}{2} \Big)\right]
-\frac{1}{2} \Big( \EE[V(y_n Z^n H^n)]+\EE[V(y_n Z^n H^{\delta})] \Big) \\
\leq -\beta \PP[ y_n Z^n |H^n-H^{\delta}|>\eps,\ y_n Z^n (H^n+H^{\delta})<N].
\end{multline}
The convex combination $\tfrac{1}{2} Z^n (H^n+H^{\delta})$ belongs
to the dual domain $\YY^{\ld^n}$, and so the following inequality
for the first term on the left-hand side of
\eqref{equ:central-inequality} holds
\begin{equation}
   \nonumber
   \begin{split}
\EE\left[ V\Big(y_nZ^nH^n\Big)\right]= v^{\lambda^n}(y_n)\leq
\EE\left[ V\Big(y_nZ^n\frac{H^n + H^{\delta}}{2} \Big)\right].
   \end{split}
\end{equation}
Combining this estimate with \eqref{equ:central-inequality} gives
\begin{multline*}
\beta \PP[ y_n Z^n |H^n-H^{\delta}|>\eps,\ y_n Z^n (H^n+H^{\delta})<N]  \leq \\
\leq \tfrac{1}{2} \Big ( \EE[V(y_n Z^n H^{\delta})] - v^{\lambda^n}(y_n)\Big)
\end{multline*}
which combined with Markov's inequality grants the inequality
\begin{equation}
   \nonumber
   \begin{split}
\beta \PP[ y_n Z^n &|H^n-H^{\delta}|>\eps] \\& \leq \tfrac{1}{N} \beta \EE[ y_n Z^n |H^n+H^{\delta}|]
 +\tfrac{1}{2} \Big( \EE[V(y_n Z^n H^{\delta})] - v^{\lambda^n}(y_n)\Big) \\
& \leq 2\beta  \tfrac{y_n}{N} + \tfrac{1}{2} \Big ( \EE[V(y_n Z^n
H^{\delta})] - v^{\lambda^n}(y_n)\Big).
   \end{split}
\end{equation}
We therefore have the overall estimate
\begin{equation*}
   \begin{split}
 \PP[ y_n Z^n |H^n-H|> 2 \eps ] & \leq \tfrac{2 y_n \delta}{\eps}+2\tfrac{y_n}{N}+
\tfrac{1}{2\beta} \Big ( \EE[V(y_n Z^n H^{\delta})] - v^{\lambda^n}(y_n)\Big).
   \end{split}
\end{equation*}
By Lemma \ref{lem:continuous-in-probability}, the third term in on the right-hand
side converges to $\EE[ V(y Z H^{\delta})]$ whereas the fourth term
converges to $\vl(y)$, thanks to Proposition \ref{pro:value-functions-converge}, and so the limit,
as $n\to\infty$, of those two terms can be bounded from above by
\begin{equation}
   \nonumber
   \begin{split}
\EE[ V(y Z H^{\delta})]-\EE[V(y Z H)] \leq \delta K, \text{ where } K=\EE[V(yZ)]-\EE[V(yZH)],
   \end{split}
\end{equation}
by a straightforward use of $V$'s convexity. To recapitulate, we
have
\begin{equation}
   \label{equ:recapitulate}
   \begin{split}
\limsup_{n\to\infty} \PP[ y_n Z^n |H^n-H|> 2 \eps ] \leq  \tfrac{2y
\delta}{\eps}+2\tfrac{y}{N}+ K \tfrac{\delta}{2\beta}.
   \end{split}
\end{equation}
Letting first $\delta\to 0$, and then $N\to\infty$ in \eqref{equ:recapitulate} shows that
\begin{equation}
   \nonumber
   \begin{split}
\lim_{n\to\infty} \PP[ y_n Z^n |H^n-H|>\eps] =0,\ \forall\,\eps>0,
   \end{split}
\end{equation}
meaning $y_nZ^nH^n - y_nZ^nH \to 0$ in probability and since also
$y_nZ^n\to yZ$ in probability, the result follows.
\end{proof}
\subsection{Proof of Proposition \ref{pro:converse}}

Convergence of the value functions $\uln$ towards $\ulz$ implies
the convergence of the derivatives $(\uln)'$ (towards $(\ulz)'$)
(see Theorem 25.7 in \cite{Roc70}). Following the ideas from the
proof of Proposition \ref{pro:value-functions-converge}, we can
conclude that $\vln\to\vlz$ and $(\vln)'\to(\vlz)'$. Since
$\hXxln_T=-V'((\uln)'(x) Z^{\ld^n}_T)$, $n\in\N_0$, and
$U'=(-V')^{-1}$ is a continuous function, the sequence
$Z^{\ld^n}_T$ converges towards $Z^{\ld^0}_T$ in probability,
implying that the convergence $\ld^n\to\ld^0$ is appropriate.

By the definition of the dual value functions, we have
 $\vln(y)=\EE[V(y Z^{\ld^n}_T)]$. The
family $\sets{V^-(y Z^{\ld^n}_T)}{\ld\in\Lambda}$ is uniformly
integrable and hence $\EE[V^+(y Z^{\ld^n}_T)]\to \EE[V^+(y
Z^{\ld^0}_T)]$, for every $y>0$. This observation, and the fact
that $V^+(y Z^{\ld^n}_T)\to V^+(y Z^{\ld^0}_T) $ in probability,
can be fed into Scheffe's Lemma to conclude that $\sets{V(y
Z^{\ld^n}_T)}{n\in\N_0}$ is a uniformly integrable sequence.
Setting $y=1$ completes the proof.
\appendix
\section{Appendix}
\begin{proposition}
\label{pro:conv-Fatou}
Suppose that $\ld^n\to\ld^0$ in $\ltwo(\mu^M)$, for some
sequence $\{\ld^n\}_{n\in\N_0}$ in $\ltwo(\mu^M)$. Then
$\Zln_T\to Z^{\ld_0}_T$ in probability.
\end{proposition}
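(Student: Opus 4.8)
The plan is to pass to logarithms and split the exponent into its martingale part and its finite-variation part, prove convergence in probability for each, and conclude by continuity of $\exp$. First I would write $\log \Zln_T = A_n + B_n$ with $A_n = -\int_0^T \ld^n_u\, dM_u$ and $B_n = -\tfrac12\int_0^T (\ld^n_u)^2\, d\ab{M}_u$, and define $A_0,B_0$ analogously from $\ld^0$. Since $x\mapsto e^x$ is continuous, it would then suffice to show $A_n\to A_0$ and $B_n\to B_0$ in probability.

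For the martingale part, I would invoke the It\^o isometry, which applies because $\ld^n,\ld^0\in\ltwo(\mu^M)$: writing $A_n - A_0 = -\int_0^T (\ld^n_u - \ld^0_u)\, dM_u$, one gets
\[ \EE\big[(A_n-A_0)^2\big] = \EE\int_0^T (\ld^n_u-\ld^0_u)^2\, d\ab{M}_u = \norm{\ld^n-\ld^0}^2_{\ltwo(\mu^M)}\to 0, \]
so $A_n\to A_0$ in $\ltwo(\PP)$ and hence in probability.

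The finite-variation part is where the real work lies, since $B_n$ is quadratic in $\ld^n$. I would factor $(\ld^n)^2-(\ld^0)^2 = (\ld^n-\ld^0)(\ld^n+\ld^0)$ and apply Cauchy--Schwarz pathwise with respect to $d\ab{M}$, obtaining
\[ 2\,\abs{B_n-B_0} \le \Big(\int_0^T (\ld^n_u-\ld^0_u)^2\, d\ab{M}_u\Big)^{1/2}\Big(\int_0^T (\ld^n_u+\ld^0_u)^2\, d\ab{M}_u\Big)^{1/2}. \]
The first factor converges to $0$ in probability, as its square has expectation $\norm{\ld^n-\ld^0}^2_{\ltwo(\mu^M)}\to 0$. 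For the second factor I would argue boundedness in probability: its expectation is at most $2(\norm{\ld^n}^2_{\ltwo(\mu^M)}+\norm{\ld^0}^2_{\ltwo(\mu^M)})$, which is bounded in $n$ because $\norm{\ld^n}_{\ltwo(\mu^M)}\to\norm{\ld^0}_{\ltwo(\mu^M)}$, and then Markov's inequality yields tightness. Since the product of a null-in-probability sequence with a tight sequence is null in probability, this gives $B_n\to B_0$.

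Combining the two displays yields $A_n+B_n\to A_0+B_0$ in probability, and continuity of the exponential finishes the argument. The main obstacle I anticipate is precisely the nonlinearity of $B_n$ in $\ld^n$: the Cauchy--Schwarz factorization, together with the uniform $\ltwo(\mu^M)$-bound (which comes for free from convergence of the norms), is exactly what lets one upgrade $\ltwo(\mu^M)$-convergence of the integrands to convergence in probability of the quadratic-variation integrals.
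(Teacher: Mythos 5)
Your proof is correct, and it takes a genuinely different route from the paper's at the key step. Both arguments open the same way: the It\^o isometry handles the martingale part, reducing the problem to showing $\int_0^T (\ld^n_u)^2\, d\ab{M}_u \to \int_0^T (\ld^0_u)^2\, d\ab{M}_u$ in probability. There the paper goes soft and measure-theoretic: it takes an arbitrary subsequence, extracts a further subsequence converging $\mu^M$-a.e., applies Fatou's lemma twice (once pathwise with respect to $d\ab{M}$, once with respect to $\PP$, using convergence of the norms $\norm{\ld^n}^2_{\ltwo(\mu^M)}$), and then invokes a converse-Fatou, Scheff\'e-type result (Lemma \ref{inverse-Fatou}) to produce an a.s.-convergent sub-subsequence, concluding via the subsequence characterization of convergence in probability. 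You instead factor $(\ld^n)^2-(\ld^0)^2=(\ld^n-\ld^0)(\ld^n+\ld^0)$ and use pathwise Cauchy--Schwarz, killing the first factor with the hypothesis and controlling the second by tightness; no subsequences and no auxiliary lemma are needed. Your route is more quantitative, and it can even be shortened: applying Cauchy--Schwarz directly on the product space $([0,T]\times\Omega,\mu^M)$ gives $\EE\abs{\textstyle\int_0^T \big((\ld^n_u)^2-(\ld^0_u)^2\big)\, d\ab{M}_u} \le \norm{\ld^n-\ld^0}_{\ltwo(\mu^M)}\,\norm{\ld^n+\ld^0}_{\ltwo(\mu^M)}\to 0$, so the quadratic terms in fact converge in $\lone(\PP)$, not merely in probability, and the tightness step becomes superfluous. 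What the paper's approach buys is the standalone Lemma \ref{inverse-Fatou}, stated and proved separately and of independent interest; what yours buys is a shorter, self-contained argument with an explicit rate in terms of $\norm{\ld^n-\ld^0}_{\ltwo(\mu^M)}$.
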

\begin{proof}
The It\^{o}-isometry implies that $\int_0^T \ld^n_u\, dM_u\to
\int_0^T \ld^0_u\, dM_u$ in $\ltwo(\PP)$, and, hence also in
probability. Thanks to the continuity of the exponential function,
it will be enough to to show that
\begin{equation}
\label{to-show}
\int_0^T (\lambda^n_u)^2\, d\ab{M}_u \to \int_0^T (\lambda^0_u)^2\,
d\ab{M}_u \ \text{in probability.}
\end{equation}
Let us recall a well-known characterization of convergence in
probability which states that a sequence $\{X^n\}_{n\in\N}$ of random
variables converges towards a random variable $X^0$ in probability if
and only if for any subsequence $\{X^{n_k}\}_{k\in\N}$ of
$\{X^n\}_{n\in\N}$ there exists a further subsequence
$\{X^{n_{k_l}}\}_{l\in\N}$
which converges to $X^0$ almost surely. With this
in mind, let $\int_0^T (\lambda^{n_k}_u)^2\, d\ab{M}_u$ be an arbitrary
subsequence of $\int_0^T (\lambda^n_u)^2\, d\ab{M}_u$.  Since
$\lambda^{n_k}\to\lambda^0$ in $\ltwo(\mu^M)$, we can extract a
subsequence of $\{\lambda^{n_k}\}_k$ which converges $\mu^M$-almost
everywhere to $\lambda^0$. We denote this subsequence by
$\{\lambda^{n_k}\}_k$, as well. By Fatou's lemma (applied to the
$d\ab{M}$-integrals) we have
\begin{align}  \label{Fatou1}
\liminf_{k \to \infty} \int_0^T (\lambda^{n_k}_u)^2\, d\ab{M}_u \ge
\int_0^T (\lambda^0_u)^2\, d\ab{M}_u.
\end{align}
Another application of Fatou's lemma (this time with respect to
the probability $\PP$) and the fact that $||
\lambda^{n_k}||^2_{\ltwo(\mu^M)}\to
||\lambda^0||^2_{\ltwo(\mu^M)}$ imply that
\begin{align*}
\mathbb{E}\left[\int_0^T (\lambda_u^0)^2\, d\ab{M}_u\right]
&=\lim_{k\to\infty} \mathbb{E}\left[\int_0^T (\lambda^{n_k}_u)^2\,
d\ab{M}_u\right]\\ &\ge\mathbb{E}\left[\liminf_{k\to \infty} \int_0^T
(\lambda^{n_k}_u)^2\, d\ab{M}_u \right]\ge \mathbb{E}\left[%
\int_0^T (\lambda_u^0)^2\, d\ab{M}_u\right]
\end{align*}
which shows that we have equality in \eqref{Fatou1}, $\PP$-almost
surely. To extract an a.s.-convergent subsequence from $\int_0^T
(\lambda^{n_k}_u)^2\, d\ab{M}_u$ - and finish the proof - all we need to do
is apply the result of Lemma \ref{inverse-Fatou} below.
\end{proof}

\begin{lemma}
\label{inverse-Fatou} Any sequence $\{f^k\}_{k\in\N}\subseteq\lone(\PP)$ of
non-negative random variables  which satisfies the two properties
\begin{align}\label{Fatou2}
\lim_{n\to\infty}\EE[f^k] = \EE[f^0],\quad \liminf_{k\to\infty}
f^k = f^0 \quad \PP-a.s.
\end{align}
for some $f^0\in \lone(\PP)$, has a subsequence $\{f^{k_l}\}_{l\in\N}$
converging almost surely to $f^0$.
\end{lemma}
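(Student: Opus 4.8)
The plan is to decompose the discrepancy $f^k - f^0$ into its positive and negative parts and to handle each separately: the $\liminf$ hypothesis will control the negative side directly and along the \emph{full} sequence, while the convergence of expectations is needed only to tame the positive side. The domination $(f^0 - f^k)^+ \leq f^0 \in \lone(\PP)$ is what links the two.

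First I would note that since each $f^k \geq 0$, the pointwise liminf satisfies $f^0 = \liminf_k f^k \geq 0$, so $f^0$ is non-negative. Rewriting the hypothesis $\liminf_k f^k = f^0$ as $\limsup_k (f^0 - f^k) = 0$ almost surely, I would deduce that the negative part $(f^k - f^0)^- = (f^0 - f^k)^+$ tends to $0$ a.s.\ along the entire sequence: for a.e.\ $\omega$ and every $\eps > 0$, the relation $\limsup_k(f^0-f^k)=0$ forces $f^0 - f^k < \eps$ for all large $k$, so $(f^0-f^k)^+ \to 0$.

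Next, because $f^k \geq 0$ yields the domination $(f^0 - f^k)^+ \leq f^0 \in \lone(\PP)$, the dominated convergence theorem gives $\EE[(f^k - f^0)^-] \to 0$. Combining this with the hypothesis $\EE[f^k] \to \EE[f^0]$, i.e.\ $\EE[f^k - f^0] \to 0$, and the identity $(f^k - f^0)^+ = (f^k - f^0) + (f^k-f^0)^-$, I would conclude $\EE[(f^k-f^0)^+] \to 0$. Thus the positive parts converge to $0$ in $\lone$, hence in probability, so I may extract a subsequence $\{k_l\}$ along which $(f^{k_l}-f^0)^+ \to 0$ almost surely. Since the negative parts already tend to $0$ a.s.\ along the full sequence, they do so along $\{k_l\}$ as well, and therefore $f^{k_l} - f^0 = (f^{k_l}-f^0)^+ - (f^{k_l}-f^0)^- \to 0$ almost surely, as required.

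The only genuine subtlety is recognizing the asymmetric roles of the two hypotheses: the $\liminf$ condition already pins down the negative side of the discrepancy for the whole sequence, while the convergence of means is exactly what is missing to control the positive side. Once this split is made, the domination $(f^0-f^k)^+\leq f^0$ renders the remaining steps routine applications of dominated convergence and the subsequence characterization of convergence in probability.
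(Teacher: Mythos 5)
Your proof is correct, but it takes a genuinely different route from the paper's. The paper works with the running infimum $g_k=\inf_{m\geq k}f^m$: since $g_k$ increases a.s.\ to $\liminf_k f^k=f^0$, the monotone convergence theorem gives $\EE[g_k]\to\EE[f^0]$, hence $\EE[f^k-g_k]\to 0$ with $f^k-g_k\geq 0$; extracting a subsequence along which $f^{k_l}-g_{k_l}\to 0$ a.s.\ and using the monotone a.s.\ convergence of $g_k$ finishes the argument. You instead split $f^k-f^0$ into positive and negative parts: the liminf hypothesis kills $(f^k-f^0)^-=(f^0-f^k)^+$ a.s.\ along the \emph{whole} sequence, the domination $(f^0-f^k)^+\leq f^0\in\lone(\PP)$ plus dominated convergence upgrades this to $\lone$-convergence, and the convergence of means then forces $\EE[(f^k-f^0)^+]\to 0$ via $x^+=x+x^-$. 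Structurally the two proofs are parallel---each isolates one piece converging a.s.\ along the full sequence (the paper: $g_k$; you: the negative part) and one piece converging to zero in $\lone$ (the paper: $f^k-g_k$; you: the positive part), then extracts a subsequence---but the decompositions and the convergence theorems invoked (monotone convergence on the running infimum versus dominated convergence with dominating function $f^0$) are different. Your argument is essentially the proof of Scheff\'e's lemma adapted to a liminf hypothesis, and it makes explicit something slightly stronger: since both $\EE[(f^k-f^0)^+]$ and $\EE[(f^k-f^0)^-]$ vanish, you have shown $f^k\to f^0$ in $\lone(\PP)$ along the entire sequence, of which the a.s.-convergent subsequence is an immediate corollary; the paper's proof could be pushed to the same $\lone$ conclusion, but goes for the subsequence directly.
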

\begin{proof}
From \eqref{Fatou2} and the Lebesgue's
theorem of monotone convergence we have
\begin{equation*}
\lim_{k\to\infty} \mathbb{E}\left[\inf_{m\ge k}f^m\right] =
\mathbb{E}[f^0] = \lim_{k \to\infty}\mathbb{E}[f^k]
\end{equation*}
which means that $f^k-\inf_{m\ge k} f^m \to 0$ in $\mathbb{L}^1$. We
can, therefore, extract a subsequence $\{f^{k_l}\}_{l\in\N}$ of
$\{f^k\}_{k\in\N}$ such that $f^{k_l}-\inf_{m\geq l} f^{k_m}$
converges to $0$ $\PP$-a.s. Thanks to monotonicity of $\inf_{m\geq k}
f^m$, the sequence $f^{k_l}$ must itself converge $\PP$-a.s.~towards
$\lim_{k\to\infty} \inf_{m\geq k} f^m= \liminf_{k} f^k=f^0$.
\end{proof}

\noindent{\bf Acknowledgements:} The authors would like to thank
Morten Mosegaard Christensen, Philip Protter, the participants at the
third Carnegie Mellon-Columbia-Cornell-Princeton Conference and an
anonymous referee for fruitful discussions.

\def\cprime{$'$} \def\cprime{$'$}
\providecommand{\bysame}{\leavevmode\hbox to3em{\hrulefill}\thinspace}
\providecommand{\MR}{\relax\ifhmode\unskip\space\fi MR }
\providecommand{\MRhref}[2]{%
  \href{http://www.ams.org/mathscinet-getitem?mr=#1}{#2}
}
\providecommand{\href}[2]{#2}

\end{document}